\newcommand{\declarecolor}[2]{\definecolor{#1}{RGB}{#2}\expandafter\newcommand\csname #1\endcsname[1]{\textcolor{#1}{##1}}}
\algnewcommand\algorithmicforeach{\textbf{for each}}
\def\R{{\mathbb{R}}}
\theoremstyle{plain}
\newtheorem{theorem}{Theorem}[section]
\newtheorem{corollary}[theorem]{Corollary}
\newtheorem{lemma}[theorem]{Lemma}
\newtheorem{conjecture}[theorem]{Conjecture}
\newtheorem{claim}[theorem]{Claim}
\theoremstyle{remark}
\newtheorem{rem}[theorem]{Remark}
\theoremstyle{definition}
\newtheorem{defn}[theorem]{Definition}
\DeclareMathOperator{\poly}{poly} 
\newcommand\abs[1]{\left\lvert#1\right\rvert}
\newcommand\norm[1]{\left\lVert#1\right\rVert}
\newcommand\eps{\varepsilon}
\newcommand{\grad}{\nabla}
\begin{document}
	\title{Computing Circle Packing Representations of Planar Graphs}
	\author{
	Sally Dong
	\\University of Washington
	\and
	Yin Tat Lee\\
	University of Washington \\
	and Microsoft Research Redmond
	\and
	Kent Quanrud\\
	University of Illinois Urbana-Champaign
	}
	\maketitle
	\begin{abstract}
		The Circle Packing Theorem states that every planar graph can be represented as the tangency graph of a family of internally-disjoint circles. A well-known generalization is the Primal-Dual Circle Packing Theorem for 3-connected planar graphs. The existence of these representations has widespread applications in theoretical computer science and discrete mathematics; however, the algorithmic aspect has received relatively little attention. In this work, we present an algorithm based on convex optimization for computing a primal-dual circle packing representation of maximal planar graphs, i.e.\ triangulations. This in turn gives an algorithm for computing a circle packing representation of any planar graph. Both take $\widetilde{O}(n \log(R/\varepsilon))$ expected run-time to produce a solution that is $\varepsilon$ close to a true representation, where $R$ is the ratio between the maximum and minimum circle radius in the true representation.
	\end{abstract}

\maketitle

\section{Introduction}
	Given a planar graph $G$, a \emph{circle packing representation} of $G$ consists of a set of radii $\{r_v : v \in V(G) \}$ and a straight line embedding of $G$ in the plane, such that 
	\begin{enumerate}
		\item For each vertex $v$, a circle $C_v$ of radius $r_v$ can be drawn in the plane centered at $v$,
		\item all circles' interiors are disjoint, and
		\item two circles $C_u, C_v$ are tangent if and only if $uv \in E(G)$.
	\end{enumerate}

	\begin{figure}[H]
		\centering 
		\includegraphics[scale=0.4]{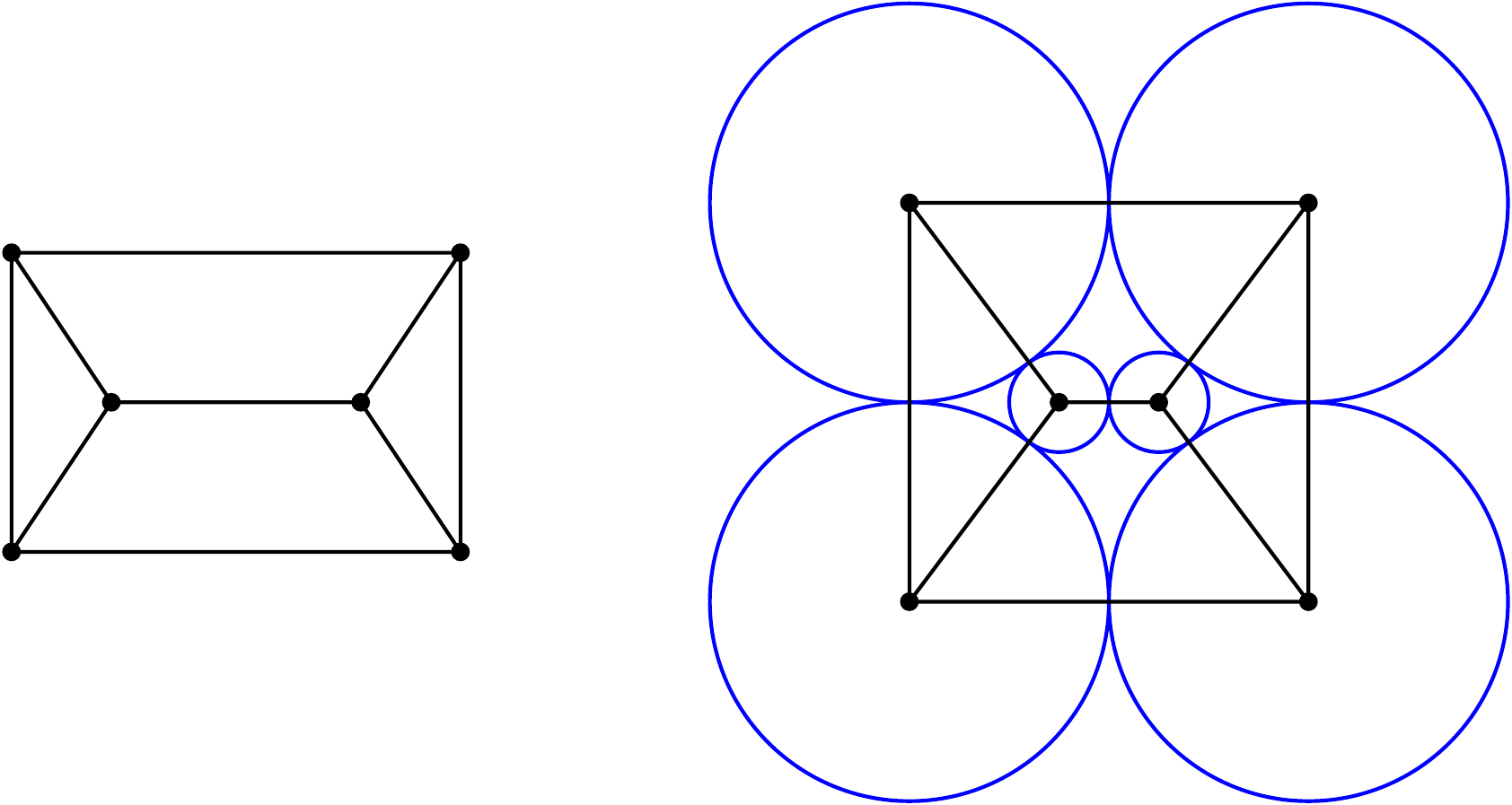}
		\caption{Example of a planar graph $G$ and its circle packing representation.}
	\end{figure}
	It is easy to see that any graph with a circle packing representation is planar. Amazingly, the following deep and fundamental theorem asserts the converse is also true.

	\begin{theorem}[Koebe-Andreev-Thurston Circle Packing Theorem~\cite{Koebe, Andreev, Thurston}] \label{thm:KAT}
		Every planar graph $G$ admits a circle packing representation. Furthermore, if $G$ is a triangulation then the representation is unique up to M\"obius transformations.
	\end{theorem}
	
	Recall that every embedded planar graph has an associated planar dual graph, where each face becomes a vertex and each vertex a face. In this paper, we will primarily focus on primal-dual circle packing, which intuitively consists of two circle packings, one
	for the original (primal) graph and another for the dual, that interact in a specific way. Formally:

	\begin{defn}[Simultaneous Primal-Dual Circle Packing] \label{def:pdcp}
		Let $G$ be a 3-connected planar graph, and $G^*$ its planar dual. Let $f_\infty$ denote the unbounded face in a fixed embedding of $G$; it also naturally identifies a vertex of $G^*$.
		
		The \emph{(simultaneous) primal-dual circle packing representation of $G$ with unbounded face $f_\infty$} is a set of numbers $\{r_v \; : \; v \in V(G)\} \cup \{r_f \; : \; f \in V(G^*)\}$ and straight-line embeddings of $G$ and $G^* - f_\infty$ in the plane such that:
		\begin{enumerate}
			\item $\{r_v \; : \; v \in V(G)\}$ is a circle-packing of $G$ with circles $\{C_v \; : \; v \in V(G)\}$,
			\item $\{r_f \; : \; f \in V(G^*) - f_\infty \}$ is a circle packing of $G^* - f_\infty$ with circles $\{C_f \; : \; f \in V(G^*) - f_\infty \}$
			\item $C_{f_\infty}$, the circle corresponding to $f_\infty$, has radius $r_{f_\infty}$ and contains $C_f$ for all $f \in V(G^*)$ in the plane. Furthermore, $C_{f_\infty}$ is tangent to $C_g$ if and only if $f_\infty g \in E(G^*)$.
			\item The two circle packing representations can be overlaid in the plane such that dual edges cross at a right angle, and no other edges cross. Furthermore, if $uv \in E(G)$ and $fg \in E(G^*)$ are a pair of dual edges, then $C_u$ is tangent to $C_v$ at the same point where $C_f$ is tangent to $C_g$.
		\end{enumerate}
	\end{defn}
	\begin{figure}[H]
		\centering 
		\includegraphics[scale=0.6]{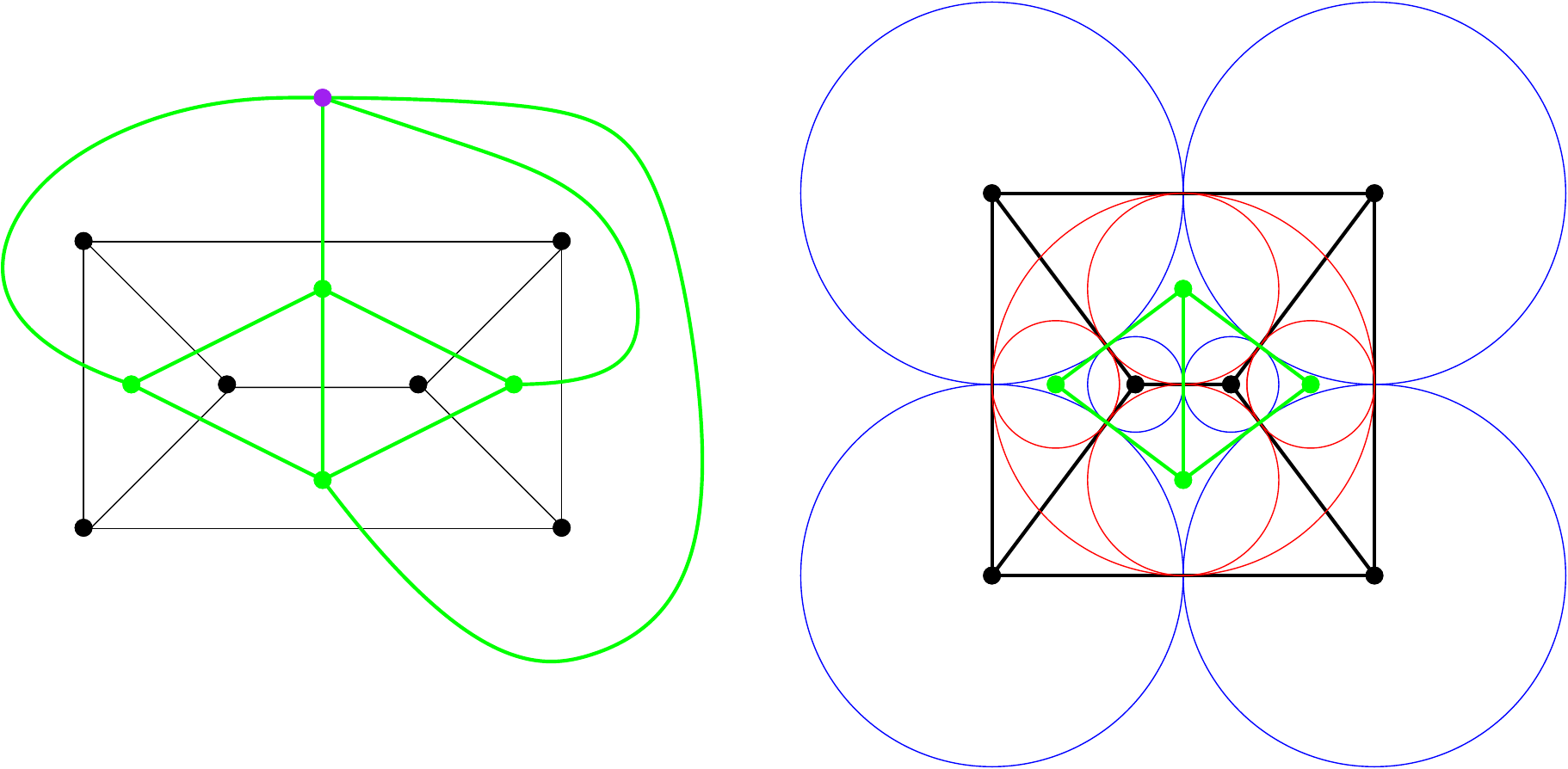}
		\caption{The planar graph $G$ from the previous example is in black on top. Its planar dual, $G^*$, is overlaid in green. The vertex corresponding to the unbounded face $f_\infty$ is marked in purple. On the bottom is the simultaneous primal-dual circle packing representation of $G$. The largest red circle is $C_{f_\infty}$.}
	\end{figure}
	
	Section~\ref{subsec:mobius-transform} provides more geometric intuitions regarding the definition. 
	
	The Circle Packing Theorem is generalized in this framework by Pulleyblank and Rote (unpublished), Brightwell and Scheinerman~\cite{Brightwell-Scheinerman}, and Mohar~\cite{Mohar93a}:
	\begin{theorem} \label{thm:SPDCP-existence}
		Every 3-connected planar graph admits a simultaneous primal-dual circle packing representation. Furthermore, the  representation is unique up to M\"obius transformations.
	\end{theorem}
	
	A primal-dual circle packing for a graph $G$ naturally produces a circle packing of $G$ by ignoring the dual circles.
	It also has a simple and elegant characterization based on angles in the planar embeddings, which we discuss in detail later. Moreover, the problem instance does not blow up in size compared to the original circle packing, since the number of faces is on the same order as the number of vertices in planar graphs.
	
	We remark here that either the radii vectors or the embeddings suffice in defining the primal-dual circle packing representation: Given the radii, the locations of the vertices are uniquely determined up to isometries of the plane; the procedure for computing them is discussed in Section~\ref{subsec:angle-graph}. Given the embedding, the radii are determined by the tangency requirements in Condition (4) of Definition~\ref{def:pdcp}.
	
	\subsection{Related Works and Applications} 
	Circle packing representations have many connections to theoretical computer science and mathematics. The Circle Packing Theorem is used in the study of vertex separators: It gives a geometric proof of the Planar Separator Theorem of Lipton and Tarjan~\cite{Miller-et-al, Har-Peled}; an analysis of circle packing properties further gives an improved constant bound for the separator size~\cite{spielman1996disk}; it is also used crucially to design a simple spectral algorithm for computing optimal separators in graphs of bounded genus and degree~\cite{kelner2006spectral}.
	In graph drawings, these representations give rise to straight-line planar embeddings; the existence of simultaneous straight-line planar embeddings of the graph and its dual, in which dual edges are orthogonal, was first conjectured by Tutte in his seminal paper in the area~\cite{Tutte}. 
	They are also used to prove the existence of Lombardi drawings and strongly monotone drawings for certain classes of graphs~\cite{eppstein2014mobius, felsner2016strongly}.  Benjamini used the Circle Packing Theorem as a key component in his study of distributional limits of sequences of planar graphs~\cite{Benjamini}. In polyhedral combinatorics, Steinitz's Theorem states that a graph is formed by the edges and vertices of a 3-dimensional convex polyhedron if and only if it is a 3-connected planar graph. The theorem and its generalization, the Cage Theorem, can be proved using the (Primal-Dual) Circle Packing Theorem~\cite{Ziegler}. For a more comprehensive overview of the other related works, see Felsner and Rote~\cite{Felsner-Rote}.
	 
	In Riemannian geometry, circle packing of triangulations is tightly connected to the Riemann Mapping Theorem, which states that there is a conformal (angle-preserving) mapping between two simply connected open sets in the plane. Thurston had conjectured that circle packings can be used to construct approximate conformal maps; this was later proved by Rodin and Sullivan~\cite{RodinSullivan}, which formed the basis of extensive work in discrete conformal mappings~\cite{he1996convergence} and analytic functions~\cite{Benjamini1996, dubejko1995circle,stephenson2002circle}. An excellent high-level exposition of this research direction is given by Stephenson~\cite{Stephenson}. One unique and important application is in neuroscience research: Conformal maps, and specifically their approximations using circle packings, can be used to generate brain mappings while preserving structural information~\cite{gu2008computational,hurdal2009discrete}. This suggests a real-world interest in efficient circle packing algorithms.
	
	Computationally, Bannister et al.\ ~\cite{bannister2014galois} showed that numerical approximations of circle packing representations are necessary. Specifically, they proved for all large $n$, there exists graphs on $n$ vertices whose exact circle packing representations involve roots of polynomials of degree $\Omega(n^{0.677})$; solving these exactly, even under extended arithmetic models, is impossible. Mohar~\cite{Mohar93a, MoharEuclidean} gave a polynomial-time iterative algorithm to compute $\eps$-approximations of primal-dual circle packings in two phases: the radii are approximated first, followed by the position of the vertices. The presentation was very recently simplified by Felsner and Rote~\cite{Felsner-Rote}. However, because run-time was not the focus beyond demonstrating that it is polynomial, a rudimentary analysis of the algorithm puts the complexity at $\widetilde{\Omega}(n^5)$.
	For general circle packing, Alam et al.\ ~\cite{alam2014balanced} gave algorithms with a more combinatorial flavour for special classes of graphs, including trees and outerpaths in linear time, and fan-free graphs in quadratic time. Chow~\cite{chow2003combinatorial} showed an algorithm based on Ricci flows that converges exponentially fast to the circle packing of the triangulation of a closed surface.
	
	In practice, for general circle packing, there is a numerical algorithm $\texttt{CirclePack}$ by Stephenson which takes a similar approach as Mohar and works well for small instances~\cite{Collins-Stephenson-Alg}. The current state-of-the-art is by Orick, Collins and Stephenson~\cite{LinearizedAlg}; here the approach is to alternate between adjusting the radii and the position of the vertices at every step. The algorithm is implemented in the $\texttt{GOPack}$ package in MATLAB; numerical experiments using randomly generated graphs of up to a million vertices show that it performs in approximately linear time. However, there is no known proof of convergence.
	
	\subsection{Our Contribution}
	We follow the recent trend of attacking major combinatorial problems using tools from convex optimization. Although the combinatorial constraints on the radii had been formulated as a minimization problem in the past (e.g.\ by Colin de Vedi\`ere~\cite{de1991principe} and Ziegler~\cite{Ziegler}), the objective function is ill-conditioned, and therefore standard optimization techniques would only give a large polynomial time. Our key observation is that the primal-dual circle packing problem looks very similar to the minimum $s$-$t$ cut problem when written as a function of the logarithm of the radii (see Equation~\eqref{eqn:convex-objective}). Due to this formulation, we can combine recent techniques in interior point methods and Laplacian system solvers~\cite{koutis2011nearly, kelner2013simple, lee2013efficient, spielman2014nearly, koutis2014approaching, cohen2014solving, peng2014efficient, kyng2016sparsified, kyng2016approximate} to get a run-time of $\widetilde{O}(n^{1.5} \log R)$, where $R$ is the ratio between the maximum and minimum radius of the circles. In the worst case, this ratio can be exponential in $n$; however the approach still gives a significantly improved run-time of $\widetilde{O}(n^{2.5})$.
	
	For further improvements, our starting point is the recent breakthrough on matrix scaling problems \cite{cohen2017matrix}, which showed that certain class of convex problems can be solved efficiently using vertex sparsifier chains \cite{kyng2016sparsified}. When applied to the primal-dual circle packing problem, it gives a run-time of $\widetilde{O}(n \log^2 R)$, which is worse than interior point in the worst case. One of the $\log R$ term comes from the accuracy requirement for circle packing; this term seems to be unavoidable for almost all existing iterative techniques. The second term comes from the problem diameter.
	
	To obtain a better bound, we present new properties of the primal-dual circle packing representation for triangulations using graph theoretic arguments. In particular, we show there is a spanning tree in a related graph such that the radii of  neighbouring vertices are polynomially close to each other (Section~\ref{subsec:good-tree}). This allows us to show that the objective function is locally strongly convex (Lemma \ref{lem:strongly_convex}). Combining this with techniques in matrix scaling \cite{cohen2017matrix}, we achieve a run-time of $\widetilde{O}(n \log R)$ for primal-dual circle packing for triangulations and general circle packing. Given the problem requires minimizing a convex function with accuracy $1/R$, we attain the natural run-time barrier of existing convex optimization techniques.
	
	\subsection{Our Result}
	For primal-dual circle packing, we focus on triangulations, which are maximal planar graphs, and present a worst-case nearly quadratic time algorithm.
	
	\begin{theorem}\label{thm:main}
		Let $G$ be a triangulation where $|V(G)| + |V(G^*)| = n$, and let $f_\infty \in V(G^*)$ denote its unbounded face. There is an explicit algorithm that finds radii $r \in \R^n$ with $r_{f_\infty} = 1$, and locations $p \in \R^{2(n-1)}$ of $V(G) \cup V(G^*) - f_\infty$ in the plane, such that
		\begin{enumerate}
			\item there exists a target primal-dual circle packing representation of $G$ with radii vector $r^* \in \R^n$ and vertex locations $p^* \in \R^{2(n-1)}$; furthermore, $r^*_{f_\infty} = 1$ and $\norm{r^*}_\infty = O(1)$,
			\item $1 - \eps \leq r_u/r^*_u \leq 1 + \eps$ for each $u \in V(G) \cup V(G^*)$, and
			\item $\norm{p_u - p_u^*}_\infty \leq \eps/R$ for each $u \in V(G) \cup V(G^*) - f_\infty,$
		\end{enumerate}
		where $R = r^*_{\max}/r^*_{\min}$ is the ratio between the maximum and minimum radius in the target representation.
		The algorithm is randomized and runs in expected time
		\[
		\widetilde{O}\left(n \log \frac{R}{\eps}\right).
		\]
	\end{theorem} 
	
	\begin{rem}
		We use the more natural $\eps/R$ for the location error instead of $\eps$, in order to reflect the necessary accuracy at the smallest circle, which has radius $\Theta(1/R)$. We use $\widetilde{O}$ in the runtime to hide a $\poly(\log(n), \log \log(R/\eps))$ factor. 
	\end{rem}
	
	From Theorem~\ref{thm:main}, an algorithm for general circle packing is easily obtained.
	 
	\begin{theorem}\label{cor:circle-packing}
		Let $G$ be \emph{any} planar graph where $|V(G)| = n$. There is an explicit algorithm that finds radii $r \in \R^n$ and locations $p \in \R^{2n}$, such that
		\begin{enumerate}
			\item there exists a target circle packing of $G$ with radii vector $r^* \in \R^n$ and vertex locations $p^* \in \R^{2n}$, and $\norm{r}_\infty = O(1)$,
			\item $1 - \eps \leq r_u/r^*_u \leq 1 + \eps$ for each $u \in V(G)$, and
			\item $\norm{p_u - p_u^*}_\infty \leq \eps/R$ for each $u \in V(G),$
		\end{enumerate}
	where $R = r^*_{\max}/r^*_{\min}$ is the ratio between the maximum and minimum radius in the target representation.
	The algorithm is randomized and runs in expected time
	\[
	\widetilde{O}\left(n \log \frac{R}{\eps}\right).
	\]
	\end{theorem}
	
	\begin{rem}$R$ is a natural parameter of the circle packing problem and is $\poly(n)$ for several classes of graphs as described in~\cite{alam2014balanced}; it is bounded by $(2n)^{n}$ in the worst case (Corollary~\ref{cor:min-max-radii-ratio}). When $R$ is $\poly(n)$, our algorithm achieve nearly linear-time complexity.\end{rem}

\section{Solution Characterization}\label{sec:graph}

	In this section, we present some structural properties of primal-dual circle packing representations which will be crucial to the algorithm. We begin with a review of basic graph theory concepts.
	
	A \emph{plane graph} is a planar graph $G$ with an associated planar embedding. The embedding encodes additional information beyond the vertex and edge sets of $G$; in particular, it defines the faces of $G$ and therefore a \emph{cyclic ordering} of edges around each vertex. It is folklore that any 3-connected planar graph has a well-defined set of faces.
	
	The \emph{dual graph} of a plane graph $G$ is denoted by $G^*$. Its vertex set is the set of faces of $G$, and two vertices are adjacent in $G^*$ whenever the corresponding faces in $G$ share a common edge on their boundary. Note that there is a natural bijection between the edges of $G^*$ and the edges of $G$. We denote the unbounded face of a plane graph $G$ by $f_\infty$.
	
	\begin{figure}[H]
		\centering 
		\includegraphics[scale=0.6]{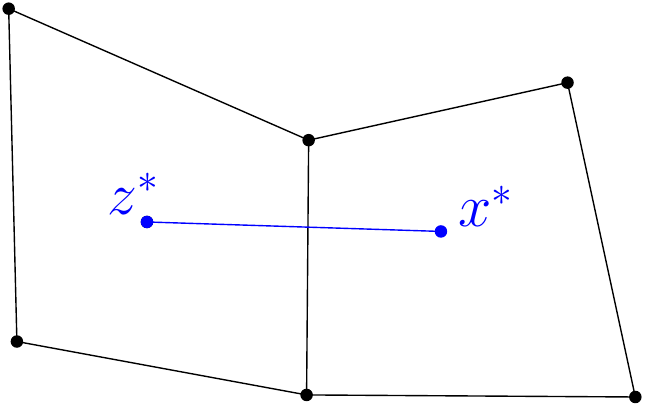}
		\caption{Two bounded faces $z,x$ of a plane graph $P$ is shown in black; they correspond to dual vertices $z^*$ and $X^*$, in blue. The dual of the edge $z^*x^* \in E(P^*)$ is the unique edge that $z^*x^*$ crosses in the embedding; note that it is on the boundaries of both faces $z$ and $x$ in $P$.}\label{fig:duality}
	\end{figure}

	\subsection{Representations on the Extended Plane}\label{subsec:mobius-transform}
	In the definition of primal-dual circle packing representation, we specified that the embeddings are in the Euclidean plane. For a more intuitive view, consider the embeddings in the extended plane with the appropriate geometry: Here, Conditions (2) and (3) in Definition~\ref{def:pdcp} collapse into one, which asks for a valid circle packing of $G^*$ in the extended plane, such that $C_{f_\infty}$ is a circle centered at infinity (its radius becomes irrelevant). All other tangency requirements hold as before, and the interaction between the primal and dual embeddings are not changed.
	
	This view ties into M\"obius transforms, mentioned in Theorems~\ref{thm:KAT} and~\ref{thm:SPDCP-existence}. A \emph{M\"obius transform} is an angle-preserving map of the extended plane to itself; moreover, it maps circles to lines or circles. It can be shown that for any two faces $f,g$ of $G$, a primal-dual circle packing representation of $G$ with unbounded face $f$ can be obtained from one with unbounded face $g$ via an appropriately defined M\"obius transform. Furthermore, the roles of $G$ and $G^*$ become interchangeable.
	
	For our algorithms, we compute a primal-dual circle packing representation of $G$ after fixing an unbounded face, and do not concern ourselves with these transforms. We continue with the original definition of embedding in the Euclidean plane.
	
	\subsection{Angle Graph}\label{subsec:angle-graph}
	Given a 3-connected plane graph $G$, the \emph{angle graph} of $G$ is the bipartite plane graph $\hat{H}_G = (V(G) \cup V(G^*), E(\hat{H}))$ constructed as follows: For each vertex $v \in V(G)$, fix its position in the plane based on $G$; place a vertex $f$ in each face of $G$ (including the unbounded face $f_\infty$); connect $v, f \in V(\hat{H})$ with a straight line segment if and only if $v$ is a vertex on the boundary of $f$ in $G$. When the original graph $G$ is clear, we simply write $\hat{H}$. It is convenient to also define the \emph{reduced angle graph} $H$, obtained from $\hat{H}$ by removing the vertex corresponding to $f_\infty$. $H$ is again a bipartite plane graph; all its bounded faces are of size four.

	The (reduced) angle graph is so named because of the properties that become apparent when its embedding derives from a primal-dual circle packing representation of $G$: Specifically, suppose $r,p$ are the radii and location vectors of a valid representation, and that the locations of vertices of $H$ are given by $p$. Note that $G$'s outer cycle $C_o  = (s_1, \dots, s_k)$ must be embedded as a convex polygon, in order for conditions on $C_{f_\infty}$ to be satisfied; suppose the polygon has interior angle $\alpha_i$ at vertex $s_i$. Then for any $u \in V(H)$,
	\begin{equation}\label{eqn:angle-constraints}
		\sum_{w \; : \; uw \in E(H)} \arctan \frac{r_w}{r_u} = \begin{cases}
		\pi \qquad &u \notin C_o\\
		\alpha_i/2 \qquad &u \in C_o.
		\end{cases}
	\end{equation}
	
	To see this, first observe that an edge $uw$ in this embedding has a natural \emph{kite} $K_{uw}$ in the plane associated with it, formed by the vertices $u, w$ and the two intersection points of $C_u$ and $C_w$. (See, for example, edge $vf$ in Figure~\ref{fig:kite-in-H}.) Furthermore, distinct kites do not intersect in the interior. Suppose $u \notin C_o$, and let $w_1, \dots, w_l$ denote its neighbours in cyclic order. Then $C_v$ is covered by the kites $K_{uw_1}, \dots, K_{uw_l}$, which all meet at the vertex $u$ and are consecutively tangent. Each neighbour $w_i$ contributes an angle of $2 \arctan(r_{w_i}/r_u)$ at $u$ for a total of $2\pi$. For the vertices on $C_o$, it can be shown that if $u = s_i$, the kites will cover an angle equal to $\alpha_i$.
	
	\begin{figure}[H]
		\centering 
		\includegraphics[scale=0.6]{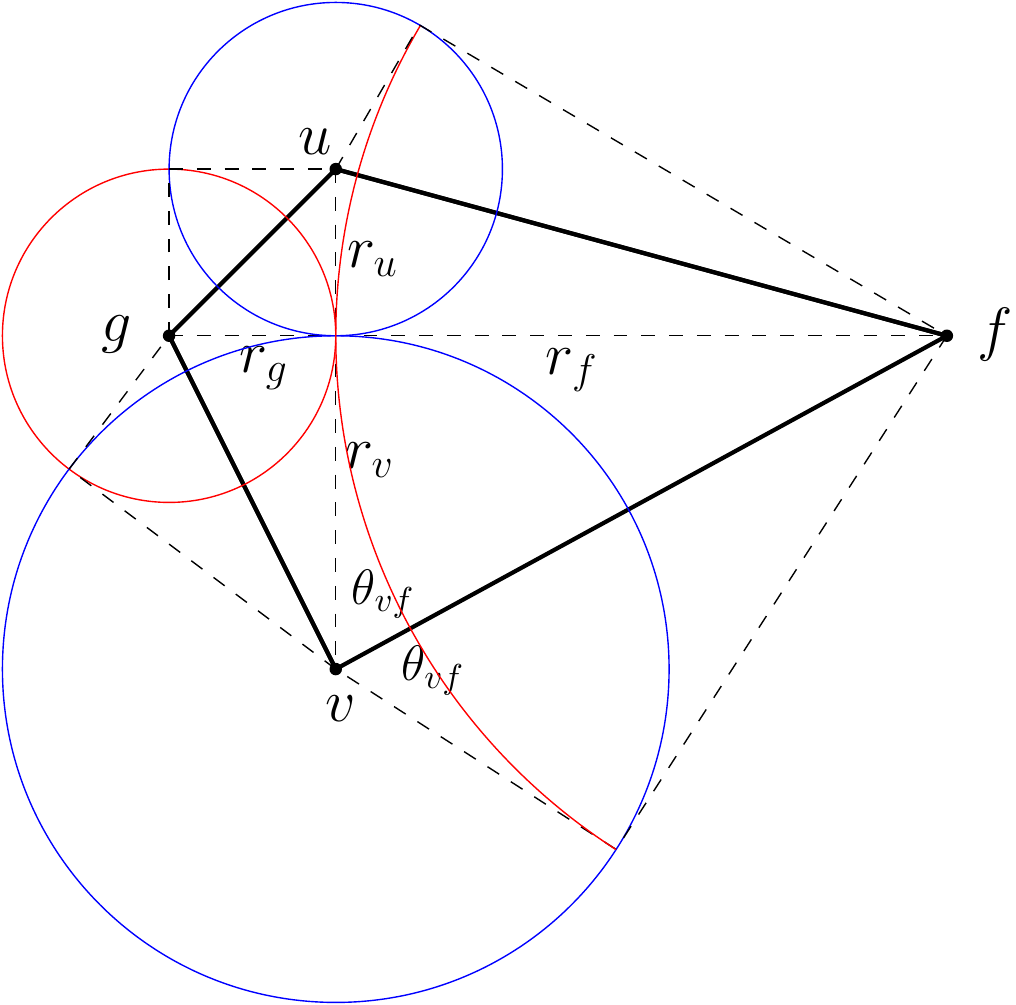}
		\caption{An illustration of the structure of $H$ locally, with edges of $H$ shown in black. Primal circles are in blue and dual circles in red. Vertices $u,f,v,g$ define the boundary of a face; $u,v \in V(G)$ and $f,g \in V(G^*)$. The edges $uv \in E(G)$ and $fg \in E(G^*)$ are dual to each other and cross at a right angle, as required by the circle packing representation. Observe, for example, $C_v$ is partially covered by $K_{vg}$ and $K_{vf}$.}
		\label{fig:kite-in-H}
	\end{figure}
	
	Conversely, any $r \in \R^{|V(H)|}$ with the above property \emph{almost} suffices as the radii of a primal-dual circle packing representation. Indeed, we can embed $H$ (and therefore $G$ and $G^* - f_\infty$) based on $r$ as follows: Fix any vertex $u$ to start; embed the vertices in $N(u)$ in cyclic order around $u$, by forming the consecutively tangent kites using $r$. The process continues in a breadth-first fashion until all the vertices are placed. By construction, this embedding with radii $r$ satisfies conditions (1),(2),(4) in Definition~\ref{def:pdcp}. Moreover, the outer cycle of $G$ forms a convex $k$-gon with interior angles $\alpha_1, \dots, \alpha_k$. 
	
	The following theorem states that vectors $r$ satisfying Equation~\eqref{eqn:angle-constraints} must exist.
	\begin{theorem}[\cite{MoharEuclidean}]\label{thm:r-existence}
		Let $G$ be a 3-connected plane graph with outer cycle $C_o = (s_1, \dots, s_k)$ and unbounded face $f_\infty$. Let $H$ be its reduced angle graph, and $\alpha_1, \dots, \alpha_k \in (0, \pi)$ such that $\sum_i \alpha_i = (k-2)\pi$. Then, up to scaling, there exists a \emph{unique} $r \in \R^{|V(H)|}$ satisfying Equation~\eqref{eqn:angle-constraints}. 
	\end{theorem}

	For our purposes, $G$ is a triangulation with outer cycle $C_o = (s_1, s_2, s_3)$ and unbounded face $f_\infty$. By Theorem~\ref{thm:r-existence}, there exists $r \in \R^{|V(H)|}$ such that Equation~\eqref{eqn:angle-constraints} is satisfied with $\alpha_i = \pi/3$ for $i = 1,2,3$. This gives rise to a primal-dual circle packing representation without $C_{f_\infty}$, where the outer cycle $C_o$ is embedded as a triangle with interior angles all equal to $\pi/3$, i.e\ an equilateral triangle. It follows that all the $r_{s_i}$'s must be equal, and therefore we can take $C_{f_\infty}$ to be the unique circle inscribed in the outer  triangle, leading to an overall valid representation.
	
	This construction motivates the next definition. 
	\begin{defn}
		For a triangulation $G$ with outer cycle $C_o = (s_1, s_2, s_3)$ and unbounded face $f_\infty$, the \emph{$C_o$-regular primal-dual circle packing representation} of $G$ is the unique representation where $C_o$ is embedded as an equilateral triangle, and $C_{f_\infty}$ is the circle of radius 1 inscribed in the triangle.
	\end{defn}

	Our algorithm will therefore focus on finding the $C_o$-regular representation, using the characterization of the radii from Theorem~\ref{thm:r-existence}.


	\subsection{Existence of a Good Spanning Tree}\label{subsec:good-tree}
	Throughout this section, $G$ denotes a triangulation with outer cycle $C_o = (s_1, s_2, s_3)$ and unbounded face $f_\infty$; $r$  denotes the radii vector of the unique $C_o$-regular primal-dual circle packing of $G$; $\hat{H}$ denotes the angle graph of $G$; and $H$ the reduced angle graph.
	
	The $C_o$-regular circle packing representation of $G$ naturally gives rise to a simultaneous planar embedding of $G, G^*$, and $H$. All subsequent arguments will be in the context of this embedding.
	
	\begin{defn}
		A \emph{good edge in $\hat{H}$ with respect to $r$} is an edge $uw \in E(\hat{H})$ so that $1/(2n) \leq r_u/r_w \leq 2n$. A set of edges is good if each edge in the set is good. Predictably, what is not good is \emph{bad}.
	\end{defn}
	Since we examine the radius of a vertex in relation to those of its neighbours, the next definition is natural:
	\begin{defn}
		Let $u \in V(\hat{H})$. For any good edge $uw$, we say $w$ is a \emph{good} neighbour of $u$. For a bad edge $uw$, we say $w$ is a \emph{bad} neighbour of $u$; we further specify that $w$ is a \emph{large} neighbour if $r_w/r_u > 2n$ or a \emph{small} neighbour if $r_u/r_w > 2n$.
	\end{defn}

	Recall that $H$ is a bipartite graph, with vertex partitions $V(G)$ and $V(G^*) - f_\infty$. For the last piece of notation, we will call a vertex $u$ of $H$ a \emph{$V$-vertex} if it is in the first partition, and call $u$ an \emph{$F$-vertex} if it is in the second partition. 
	
	Our main theorem in this section is the following:
	
	\begin{theorem}\label{thm:good-tree}
		There exists a good spanning tree in $\hat{H}$ with respect to $r$.
	\end{theorem}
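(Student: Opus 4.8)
The plan is to exhibit a spanning connected subgraph of $\hat H$ all of whose edges are good; any spanning tree of it is then the desired good spanning tree. Write $\hat H_r$ for the subgraph of $\hat H$ consisting of exactly the good edges, so that it suffices to prove $\hat H_r$ is spanning and connected, and I would work throughout inside the given simultaneous embedding of $G$, $G^*$, and $H$.

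The first step is a local identity relating radii to angles. Fix a $V$-vertex $v$ and an incident $F$-vertex $f$; then $f$ is a triangular face of $G$ with $v \in \partial f$, and I claim $r_f/r_v = \tan(\gamma/2)$, where $\gamma$ is the interior angle of $f$ at $v$. This holds because in the primal-dual packing $C_v \perp C_f$ and $C_f$ passes through the two points of $C_v$ where $C_v$ touches the $G$-neighbours of $v$ along $\partial f$, so the $\hat H$-edge $vf$ bisects the angle $\gamma$ inside $f$ and meets $C_v$ along a ray making angle $\arctan(r_f/r_v) = \gamma/2$ with a bounding edge of $f$. Consequently $vf$ is good if and only if $2\arctan\frac{1}{2n} \le \gamma \le \pi - 2\arctan\frac{1}{2n}$, i.e. iff $\gamma$ is bounded away from $0$ and from $\pi$ by roughly $1/n$. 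I would also record the two elementary angle facts this interacts with: the three angles of a triangular face sum to $\pi$ (hence are each $<\pi$), and the face-angles around an interior vertex sum to $2\pi$ over at most $n-1$ faces.

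For the spanning property I must show every vertex of $\hat H$ has a good incident edge. For an $F$-vertex $f = v_1v_2v_3$ the relevant angles $\gamma_1,\gamma_2,\gamma_3$ are the angles of the triangle $f$, so they sum to $\pi$ and the largest lies in $[\pi/3,\pi)$; its edge is good unless that angle is within $\approx 1/n$ of $\pi$, i.e. unless $f$ is an extremely thin triangle. For a $V$-vertex $v$ the incident face-angles sum to $2\pi$ with at most $n-1$ terms, so at most two of them can exceed $\pi - 2\arctan\frac1{2n}$ and, by the angle sum, they cannot all lie below $2\arctan\frac1{2n}$; the only remaining way for $v$ to have no good incident edge is the ``pinched'' configuration with one or two near-$\pi$ face-angles and all other face-angles below $\approx 1/n$. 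The crux of the whole argument is to rule out these extreme thin and pinched local configurations. This cannot be done by angle sums alone, so here I would use the rigidity of the unique $C_o$-regular primal-dual packing --- for instance, in a pinched configuration the two large face-circles at $v$ would be distinct, mutually tangent circles yet both forced to degenerate to the same near-diameter of $C_v$ --- together with the $C_o$-regular normalization to treat $s_1,s_2,s_3$ and $f_\infty$ separately. I expect this geometric exclusion to be the main obstacle.

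It remains to show $\hat H_r$ is connected. Using the same angle analysis I would upgrade the previous step to the statement that from every non-maximal vertex $u$ there is a good edge to a vertex of strictly larger radius. Then, starting from any vertex and repeatedly following a good edge of strictly larger radius, the radius strictly increases, so the walk terminates --- at the (generically unique) largest circle --- and the union over all starting vertices of these good paths is a connected spanning subgraph of $\hat H_r$. Should the ``strictly larger radius'' upgrade fail at some vertex, the fallback is a planarity/duality argument: a hypothetical bad edge-cut isolating a component of $\hat H_r$ traces a closed curve in the embedding crossing only bad edges, and tracking the identity $r_f/r_v = \tan(\gamma/2)$ through the successive quadrilateral faces of $\hat H$ along this curve forces radii incompatible with the embedding. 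In either case $\hat H_r$ is spanning and connected, which proves the theorem.
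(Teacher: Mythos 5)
Your reduction (show the good edges span a connected subgraph), your identity $r_f/r_v=\tan(\gamma/2)$, and your diagnosis of the only possible bad local configurations (a ``thin'' triangular face, i.e.\ an $F$-vertex with one small and two large neighbours, and a ``pinched'' $V$-vertex) all agree with the paper, which encodes the same information in the angle constraint $\sum_{w:uw\in E(H)}\arctan(r_w/r_u)=\pi$. But you then say explicitly that ruling out these configurations ``cannot be done by angle sums alone'' and that you ``expect this geometric exclusion to be the main obstacle'' --- and you never supply it. That exclusion is the technical core of the theorem, and ``rigidity of the unique $C_o$-regular packing'' is not an argument. The paper proves it quantitatively: if $C_{v_1},C_{v_2}$ are tangent primal circles and a connected family of fewer than $n$ primal circles avoids their interiors but comes within distance $\sim r/n$ of the tangency point, then the whole family is trapped in the cusp between $C_{v_1}$ and $C_{v_2}$ (each added circle at most increments a normalized height by a fixed amount, giving the recurrence $b_{i+1}=b_i-2$ from $b_0=2n$, so $m<n$ circles never escape). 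Since $G$ is a triangulation, the neighbours of $v_1$ form a closed chain of mutually tangent circles that must surround $C_{v_1}$ (or, if $v_1\in C_o$, reach the third outer circle), which cannot happen inside the cusp --- contradiction. The pinched $V$-vertex case then follows by a one-line reduction ($v$ would be a small neighbour of one of its large $F$-neighbours), rather than by your heuristic about two circles ``degenerating to the same near-diameter,'' which as stated is not a proof (two face angles near $\pi$ at $v$ open into nearly opposite half-planes, so no immediate collision occurs).

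The connectivity step has a second gap. Your primary plan --- from every non-maximal vertex there is a \emph{good} edge to a \emph{strictly larger} neighbour --- is asserted without proof and is doubtful: an $F$-vertex always has a neighbour with $r_v\ge\sqrt3\,r_f$ (smallest face angle $\le\pi/3$), but that edge may be the bad one, and the remaining two good neighbours can both have face angle $\pi/2$, hence $r_v=r_f$ exactly. Your fallback (no bad edge-cut) is the right statement, but ``tracking the identity along the closed dual curve forces incompatible radii'' skips all the actual work: one must take a minimal bad cut $T$, observe that $T^*$ is a cycle in $H^*$ and that $H[T]$ is a disjoint union of stars (since $F$-vertices have degree $3$ and at most one bad edge each), rule out the single-star case using the spanning property, locate two consecutive cut edges lying on a common \emph{bounded} quadrilateral face of $H$ but in different stars (bounded because the outer face of $H$ is bordered only by good edges, which itself needs the $C_o$-regular normalization), and then run a short case analysis on the fifth edge of that quadrilateral to manufacture an $F$-vertex with two large neighbours --- again landing on the geometric claim you left open. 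So the proposal identifies the right skeleton but omits both load-bearing arguments.
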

	
	\begin{proof}		
		First, we consider $f_\infty \in V(\hat{H})$: It has radius 1, and is inscribed in the equilateral triangle with vertices $C_o = \{s_1, s_2, s_3\}$. Hence, $r_{s_i} = \tan \frac{\pi}3$ for each $s_i \in C_o$. It follows that all of $f_\infty$'s incident edges in $\hat{H}$ are good, so any good spanning tree in $H$ extends to one in $\hat{H}$. 
		
		It remains to find a good spanning tree in $H$. To continue, we require the following lemma regarding a special circle packing structure. 
		
		\begin{lemma}\label{lem:circle-packing-structure}
			Let $C_1$ and $C_2$ be two circles with centers $X$ and $Y$ and radii $R_1, R_2$ respectively, and tangent at a point $P$. Suppose without loss of generality $R_2 \leq R_1$. Let $Q$ be a point of distance $R_2/n$ from $P$, so that $PQ$ and $XY$ are perpendicular. Let $L_1$ be a line segment parallel to $XY$ through $Q$ with endpoints on $C_1$ and $C_2$. Let $L_2$ be the line parallel to $XY$, further away from $XY$ than $L_1$, and tangent to $C_2$.
			
			Suppose we place a family $\mathcal{C} = \{D_1, \dots, D_m\}$ of $m$ internally-disjoint circles (of any radius) in the plane, where $m < n$, such that:
			\begin{enumerate}
				\item no circles from $\mathcal{C}$ intersect $C_1$ or $C_2$ in the interior,
				\item at least one circle from $\mathcal{C}$ intersects $L_1$, and
				\item the tangency graph of $\mathcal{C}$ is connected,
			\end{enumerate} 
			then all circles in $\mathcal{C}$ are contained in the region bounded by $C_1, C_2, L_2$.
		\end{lemma}
		
		\begin{figure}[H]
			\centering 
			\includegraphics[scale=0.4]{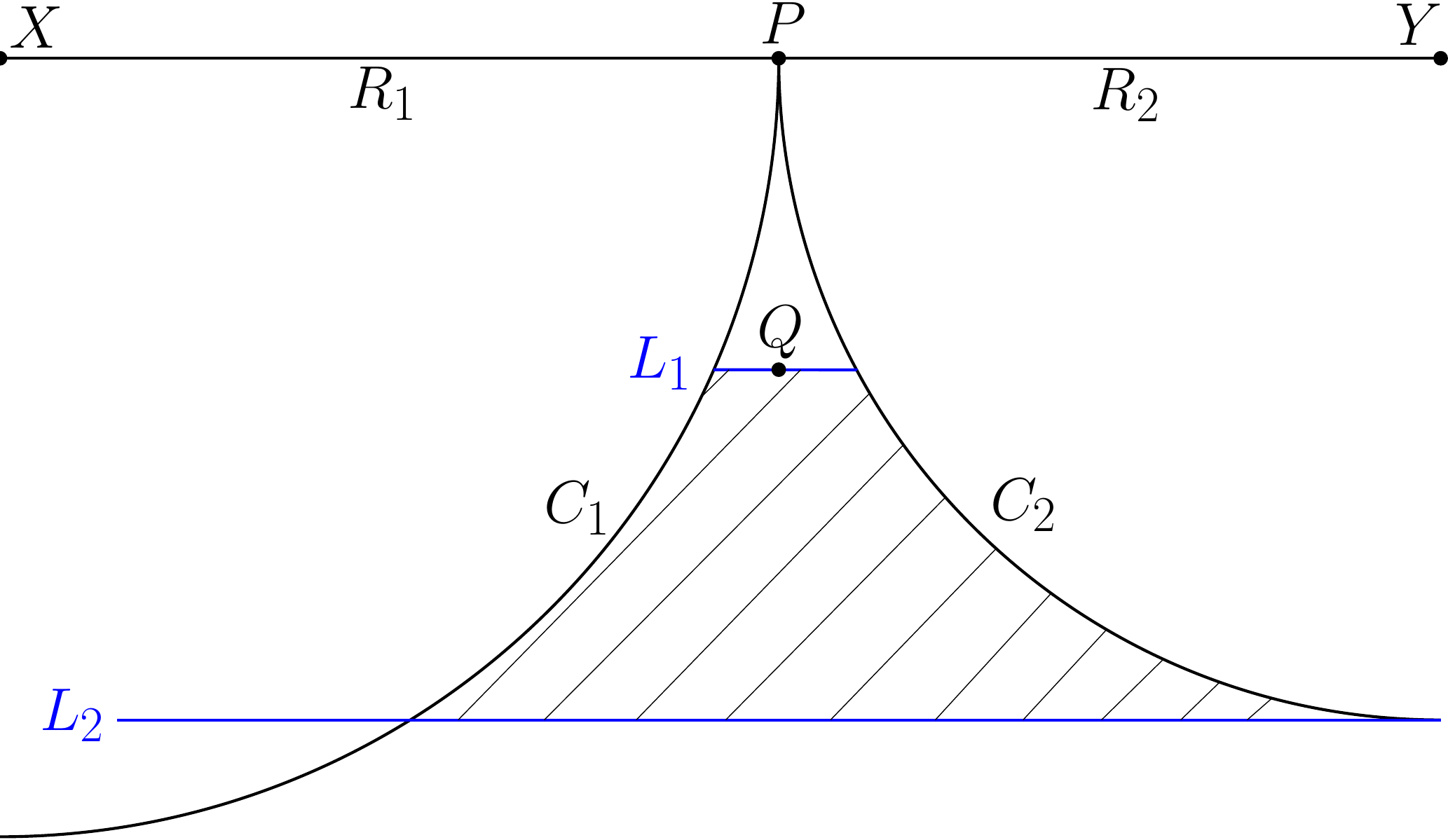}
			\caption{Illustration of Lemma~\ref{lem:circle-packing-structure}. Any family $\mathcal{C}$ of circles satisfying the conditions of the lemma must be contained in the shaded region. The diagram is slightly deceptive: in reality $Q$ is much closer to $P$ than depicted.}
			\label{fig:geometry}
		\end{figure}
		
		\begin{proof}
			Let $h(\mathcal{C})$ denote the maximum Euclidean distance between a point $x$ on a circle in $\mathcal{C}$ and the line $XY$; in other words, $h(\mathcal{C}) = \max \{d(x, XY) \; : \; x \in \bigcup_{D \in \mathcal{C}} C\}$. We want to show $h(\mathcal{C}) < R_2$. 
			
			Suppose we place the circles one at a time while maintaining tangency, starting with $D_1$ intersecting $L_1$. By elementary geometry, it is clear that each additional circle $D_i$ should be below $D_{i-1}$,  tangent to only $D_{i-1}$, and be of maximum size possible, i.e.\ tangent to both $C_1$ and $C_2$. Given this observation, it suffices to consider when $R_1 = R_2 = R$. 
			
			In this case, $h(\mathcal{C})$ is maximized when all the circles are arranged as described above, and have their centers on the line through $P$ and $Q$. Let $a_i$ be the radius of $D_i$ for each $i \in [m]$, and let $a_0 = d(P,Q)$. By the Pythagorean Theorem, we know that if $h_0 = d(P,Q)$ and $h_i = h(\{D_1, \dots, D_i\})$, then
			\[
			(a_{i+1} + h_i)^2 + R^2 = (a_{i+1} + R)^2.
			\]
			Hence we have the following recurrence relationship:
			\begin{align*}
			h_0 &= R/n \\
			h_{i+1} &= h_i + 2a_{i+1} = h_i\left(1+ \frac{h_i}{R - h_i}\right)
			\intertext{If we let $b_i = 2R/h_i$, then}
			b_0 &= 2n \\
			\frac{2R}{b_{i+1}} &= \frac{2R}{b_i}\left(1 + \frac{2}{b_i - 2}\right) \\
			b_{i+1} &= \frac{b_i}{1 + \frac{2}{b_i - 2}}\\
			&= b_i - 2.
			\end{align*}
			So $h_m = \frac{R}{b_m} = \frac{R}{2n - 2m} < R$, as desired.
		\end{proof}
		
		Recall that $r$ satisfy the following angle constraints for each $u \in V(H) - C_o$: 
		\begin{equation} \label{eqn:angle-constraint-again}
		\sum_{w \; : \; uw \in E(H)} \arctan \frac {r_w}{r_u} = \pi 
		\end{equation}
		
		\begin{claim}\label{claim:f-vertex-neighbour}
			Every $F$-vertex in $H$ can have at most 1 large neighbour. Furthermore, they have no small neighbours. Consequently, there are no $F$-vertices with only bad neighbours.
		\end{claim}
	
		\begin{proof}
			W require $G$ to be a triangulation, so that all $F$-vertices have degree three.
			
			Suppose $f$ is an $F$-vertex with 2 large neighbours $v_1, v_2$, and without loss of generality, $r_{v_1} \leq r_{v_2}$. By the angle constraints in Equation~\eqref{eqn:angle-constraint-again}, $f$'s third neighbour $u$ must be small.
			
			Consider the primal-dual circle packing locally around $f$: The circles $C_{v_1}, C_{v_2}$ are tangent at a point $P$, which is on the line $L$ connecting the centers of the two circles. Furthermore, $C_f$ is tangent to $L$ at the point $P$. By the definition of large neighbours, we know $r_f < r_1/2n$. Moreover, $C_u$ must intersect $C_f$, so $C_u$ is at a distance of at most $2r_f$ away from $P$. Now, let us restrict our attention to primal circles (which include $C_{v_1}, C_{v_2}, C_u$) and apply Lemma~\ref{lem:circle-packing-structure}.
			
			Let $N_G(v_1) = \{v_2, u = w_1, \dots, w_l\}$ denote the neighbours of $v_1$ in $G$ in cyclic order. Since $G$ is a triangulation, we know that $w_i w_{i+1} \in E(G)$ for each $i \in [l]$, and $w_lv_2 \in E(G)$. 
			This means in the primal circle packing, $C_{w_i}$ is tangent to $C_{w_{i+1}}$ for each $i$, and $C_{w_l}$ is tangent to $C_{v_2}$. There are two cases to consider: 
			\begin{enumerate}
				\item $v_1 \notin C_o$: 
				In this case, $v_1, v_2, w_l$ are the vertices of a bounded face of $G$. Hence, the circles $C_{v_2}, C_u, C_{w_1}, \dots, C_{w_l}$ are consecutively tangent and surround $C_{v_1}$. This contradicts the conclusion of Lemma~\ref{lem:circle-packing-structure}.
				
				\item $v_1 \in C_o$: 
				Note that the primal circles with the largest radii correspond to the vertices in $C_o$. Since $r_{v_2} \geq r_{v_1}$, we must have $r_{v_2} = r_{v_1}$ and $v_2 \in C_o$. Then $w_l \in C_o$ must be the third vertex on the boundary of $f_\infty$, with $v_1, v_2, w_l$ forming an equilateral triangle. Staring at the position of $C_{w_l}$, we see that this also contradicts the conclusion of Lemma~\ref{lem:circle-packing-structure}.
			\end{enumerate}
		
			\begin{figure}[H] 
				\centering 
				\includegraphics[scale=0.75]{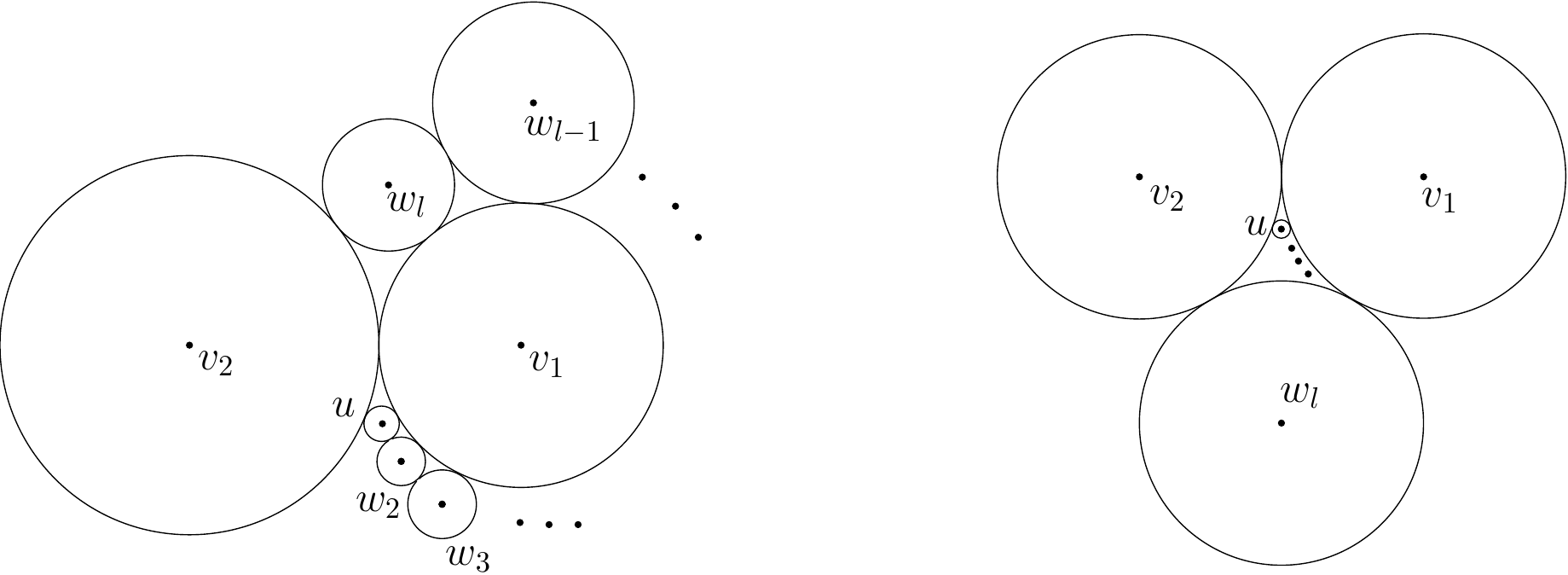}
				\caption{Illustration of the primal circles around $v_1$. There are two possible cases: in the former, circles corresponding to the neighbours of $v_1$ surround $C_{v_1}$; in the latter, they do not. Ellipses indicate additional $C_{w_i}$'s that are tangent to $C_{v_1}$.}
				\label{fig:f-neighbour-cases}
			\end{figure}
	
			So we have shown that $f$ has at most one large neighbour. 
			
			Suppose $f$ has a small neighbour $u$ and two other neighbours $v,w$, at most one of which is large. Then by the angle constraints in~\eqref{eqn:angle-constraint-again},
			\[
				\pi = \arctan \frac{r_u}{r_f} + \arctan \frac{r_v}{r_f} + \arctan \frac{r_w}{r_f} < \arctan n^{-2} + \arctan n^2 + \pi/2 = \pi,
			\]
			a contradiction.
		\end{proof}
	
		\begin{claim}\label{claim:no-isolated-v-vertex}
			There are no $V$-vertices in $H$ with only bad neighbours.
		\end{claim}
		\begin{proof}
			Suppose $v$ is a $V$-vertex with only bad neighbours. Again, by Equation~\eqref{eqn:angle-constraint-again}, two of its neighbours are large and the remaining are small. Let $f$ denote one of its large neighbours. But then $v$ is a small neighbour of $f$, contradicting the previous claim.	
		\end{proof}
		
		We have shown there are no vertices incident to only bad edges. Before proceeding, we observe the following:
		\begin{claim}\label{claim:H-boundary-good}
			Recall $C_o = (s_1, s_2, s_3)$ are vertices on the outer cycle of $G$. Let $t_1, t_2, t_3$ be $F$-vertices corresponding to faces in $G$ that are adjacent to $f_\infty$. 
			Let $B = (s_1, t_1, s_2, t_2, s_3, t_3) \subset V(H)$ be the set of vertices on the outer cycle of $H$. 
			Then all the edges of $H[B]$ are good.
		\end{claim}
		\begin{proof}
			Suppose without loss of generality that $s_i t_i$ is a bad edge. Since $s_1,s_2,s_3$ have equal radii, $t_i s_{i+1}$ must also be a bad edge. This contradicts Claim~\ref{claim:f-vertex-neighbour} which specifies that $t_i$ has no small neighbours and at most one large neighbour.
		\end{proof}
		
		It remains to show there are no bad cuts in $H$. Suppose for a contradiction $T \subset E(H)$ is a minimal bad cut. 
		Since $H$ is a planar graph, $T^*$ is a cycle in the dual graph $H^*$.
	
		For a face $z_i$ in $H$, we denote its dual vertex in $H^*$ by $z_i^*$. Recall the dual of an edge $z^*x^* \in E(H^*)$ is a well-defined edge that is contained in both boundaries of $z,x \in F(H)$. 
		Suppose the edges of $T^*$, in order, are $\{z^*_1z^*_2, z^*_2z^*_3, \dots, z^*_kz^*_1\}$. Then $(z_1, \dots, z_k) \subseteq F(H)$ is a sequence of distinct faces of $H$  such that $T = \{e_1, e_2, \dots, e_k\}$, where $e_i$ is a well-defined edge on the boundaries of both $z_i$ and $z_{i+1}$, and $e_k$ is on the boundaries of both $z_k$ and $z_1$.
		
		\begin{figure}[H]
			\centering 
			\includegraphics[scale=0.7]{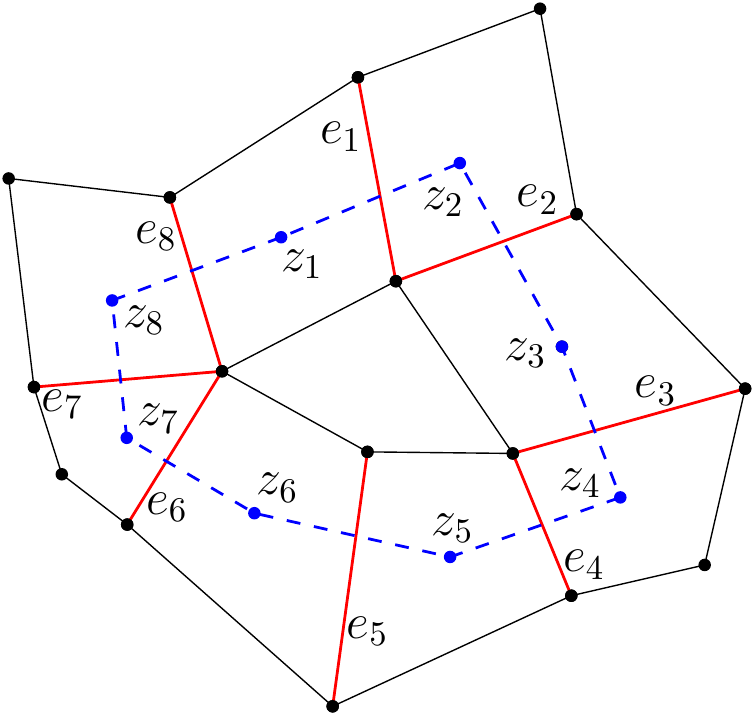}
			\caption{An illustration of what $T$ looks like in $H$ with respect to the dual. A subgraph of $H$ is shown in black, with edges of $T$ highlighted in red. Each $z_i$ denotes a face in $H$, and correspond to a vertex $z_i^*$ in the dual. $T^*$ is shown in dashed blue. (Note the vertices are in the correct relative locations but do not necessarily reflect a proper circle packing representation.)}
			\label{fig:duality2}
		\end{figure}
	
		Consider $H[T]$, the subgraph induced by the edges of $T$: Since $F$-vertices in $H[T]$ have degree one, the components of $H[T]$ must be star graphs. If there is only one component, say with center $u$ and leaves $N(u)$, then $T$ disconnects $u$ from the rest of the graph, contradicting the fact that $u$ must have a good neighbour. Hence there must be at least two components in $H[T]$. (For example, in Figure~\ref{fig:duality2}, $H[T]$ is in red and consists of 4 components.)
		
		Suppose $e_i$ and $e_{i+1}$ are in distinct components of $H[T]$. Both edges are on the boundary of face $z_i$. By Claim~\ref{claim:H-boundary-good}, we know $z_i$ is not the unbounded face of $H$. Recall each bounded face of $H$ has size four, hence we may denote the four vertices on the boundary of $z_i$ by $u,f,v,g$, where $u,v$ are $V$-vertices and $f,g$ are $F$-vertices. Suppose without loss of generality $e_i = uf$. Then since $e_i$ and $e_{i+1}$ are not connected, we must have $e_{i+1} = vg$.
		
		\begin{figure}[H]
			\centering 
			\includegraphics[scale=0.6]{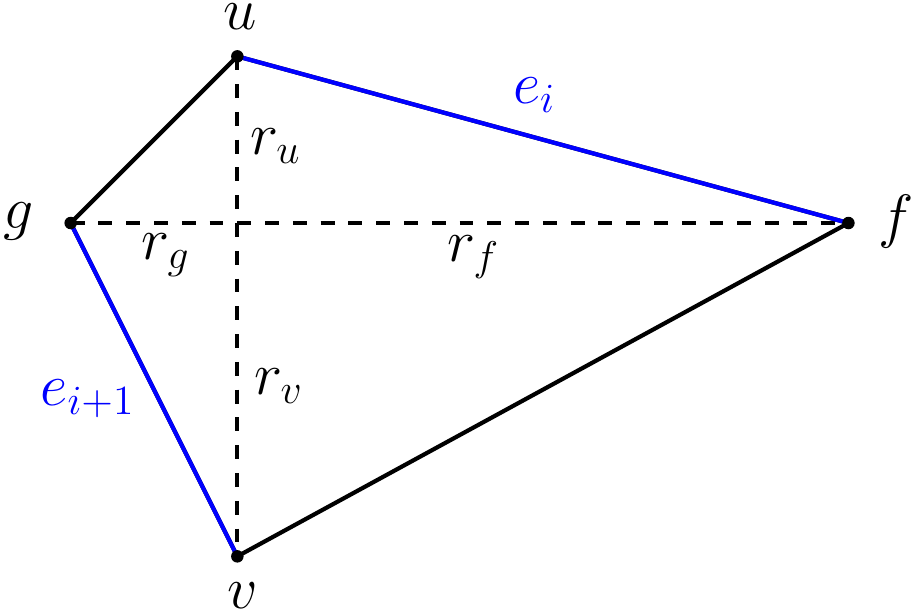}
			\caption{The face $z_i$ in $H$. Note the radii are not necessarily accurate.}
		\end{figure}
		
		Recall $r_u > 2n \cdot r_f$ and $r_v > 2n \cdot r_g$ by definition of bad edges and the fact that $F$-vertices only have big neighbours. 
		Consider the edge $fv$: 
		\begin{enumerate}
			\item If $r_v > 2n \cdot r_f$, then both $u$ and $v$ are large neighbours of $f$;
			\item If $n^2 \cdot r_f \geq r_v$, then $r_u > 2n \cdot r_f \geq r_v > 2n \cdot r_g$, so both $u$ and $v$ are large neighbours of $g$.
		\end{enumerate}
		In both cases, we get a contradiction to Claim~\ref{claim:f-vertex-neighbour}. It follows that there are no bad cuts in $H$, which concludes the overall proof.
	\end{proof}

	As a corollary of Theorem~\ref{thm:good-tree}, we have the following:
	\begin{corollary}[\cite{MoharEuclidean}]\label{cor:min-max-radii-ratio}
		Let $G$ be a triangulation with $|V(G)|+|V(G^*)| = n$. Let $r$ be the radii vector of a valid primal-dual circle packing for $G$. Then $r_{\max} / r_{\min} \leq (2n)^n$. \qed
	\end{corollary}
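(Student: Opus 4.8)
The plan is to read off the bound directly from the good spanning tree furnished by Theorem~\ref{thm:good-tree}. First I would pin down the vertex count: the angle graph $\hat H$ has vertex set $V(G)\cup V(G^*)$ (including $f_\infty$), so $|V(\hat H)| = |V(G)| + |V(G^*)| = n$. Hence any spanning tree $\mathcal T$ of $\hat H$ has exactly $n-1$ edges, and for any two vertices of $\hat H$ the unique path joining them inside $\mathcal T$ has length at most $n-1$.

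Next I would set up the telescoping estimate. Choose vertices $u,w\in V(\hat H)$ with $r_u = r_{\max}$ and $r_w = r_{\min}$, and let $u = x_0, x_1, \dots, x_\ell = w$ be the path between them in a good spanning tree $\mathcal T$ of $\hat H$ with respect to $r$, where $\ell \le n-1$. By definition of a good edge, $1/(2n) \le r_{x_i}/r_{x_{i+1}} \le 2n$ for every $i$, so in particular each factor below is at most $2n$:
\[
\frac{r_{\max}}{r_{\min}} = \frac{r_u}{r_w} = \prod_{i=0}^{\ell-1} \frac{r_{x_i}}{r_{x_{i+1}}} \le (2n)^{\ell} \le (2n)^{n-1} \le (2n)^{n}.
\]
This is exactly the claimed inequality.

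The only point that needs a word of care is that Theorem~\ref{thm:good-tree} is stated for the unique $C_o$-regular primal-dual circle packing, whereas the corollary speaks of an arbitrary valid primal-dual circle packing; I would note that once the unbounded face is fixed, a primal-dual circle packing of a triangulation is unique up to a similarity of the plane, and similarities preserve all ratios of radii, so the bound transfers verbatim. I do not anticipate any real obstacle here: the substance of the argument is entirely contained in Theorem~\ref{thm:good-tree}, and what remains is only the bookkeeping that $|V(\hat H)| = n$ together with the observation that a good edge bounds the radius ratio by $2n$ in both directions.
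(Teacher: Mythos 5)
Your argument is correct and is precisely the intended derivation: the paper marks the corollary with \qed because it follows immediately from Theorem~\ref{thm:good-tree} by telescoping the radius ratios along the tree path of length at most $n-1$, each factor bounded by $2n$. Your added remarks on $|V(\hat H)|=n$ and on the invariance of radius ratios under similarity are sound bookkeeping and do not change the approach.
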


	We remark here that if the maximum degree of $G$ is $\Delta$, then $2n$ in the definition of good edge can be replaced by $2\Delta$, and the good tree proof would still hold true. Furthermore, for any edge $vf \in E(H)$, we can show there is a good path from $v$ to $f$ of length $O(\Delta)$ by a careful case analysis around vertex $v$ similar to above. It follows that $r_{\max}/r_{\min} \leq \Delta^{O(\Delta D)}$ where $D$ is the diameter of $G$. The proof is omitted.

	Finally, although we assume in this section that the original graph $G$ is a triangulation, we conjecture the analogous result holds for general graphs:
	\begin{conjecture}
		Let $G$ be a 3-connected planar graph, and let $\hat{H}_G$ be its angle graph. Suppose $r$ is the radii vector of a valid primal-dual circle packing representation for $G$. Then there exists a good tree in $\hat{H}_G$ with respect to $r$.
	\end{conjecture}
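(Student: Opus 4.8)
The plan is to follow the architecture of the proof of Theorem~\ref{thm:good-tree}, replacing each step that used ``$G$ is a triangulation'' by one valid for 3-connected $G$. As there, after a normalized ($C_o$-regular type) primal--dual packing is fixed, one first disposes of $f_\infty$ and it then suffices to establish, for $H = \hat H_G - f_\infty$: (A) no vertex of $H$ is incident to bad edges only, and (B) $H$ has no bad cut. Several ingredients transfer with no change. The angle constraint \eqref{eqn:angle-constraint-again} holds at every interior vertex of $H$: at an $F$-vertex summing over its incident vertices, at a $V$-vertex summing over its incident faces. The graph $H$ is still a quadrangulation---each bounded face corresponds to an edge of $G$ and has size four---so the description of a minimal cut $T$ as a cycle $T^*$ in $H^*$ traversing faces $z_1,\dots,z_k$, each met in two of its four edges, is unchanged. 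The vertex $f_\infty$ is dispatched as in the opening of the proof of Theorem~\ref{thm:good-tree}, after normalizing so that $C_{f_\infty}$ is inscribed in a regular $\deg(f_\infty)$-gon; the same computation then shows all of $f_\infty$'s incident edges in $\hat H_G$ are good. Finally, the primal--dual packing is structurally self-dual ($\hat H_{G^*} = \hat H_G$, with vertex- and face-circles interchanged), so every statement proved about $F$-vertices has a mirror for $V$-vertices and it suffices to argue one side.

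The heart of the matter is the analogue of Claim~\ref{claim:f-vertex-neighbour} for faces and vertices of unbounded degree. Angle counting already does much of the work. Let $x$ be an interior vertex of $H$ of degree $d$. A large neighbour contributes more than $\arctan 2$ to the sum in \eqref{eqn:angle-constraint-again}, and $3\arctan 2 > \pi$, so $x$ has at most two large neighbours; and since $d < n$ while $n\arctan\frac{1}{2n} < 1/2$, if $x$ has at most one large neighbour then its other $d-1$ neighbours cannot all be small, so $x$ has a good neighbour. Thus the only configuration threatening (A)---and the only one that can make a component of $H[T]$ fail to be a star---is a vertex with \emph{exactly two large neighbours and every other neighbour small}, which has to be excluded geometrically. (This is where degree three entered the triangulation proof: with $d=3$ a second small neighbour already contradicts the angle constraint, so the offending configuration never arises; for $d \ge 4$ angles alone no longer suffice.)

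The geometric step is where I expect the real difficulty. One would like to rerun the trapping argument: if an $F$-vertex $f$ has two large neighbours $v_1,v_2$ with $r_{v_1}\le r_{v_2}$, then $C_f$ is tiny, the circles $C_{v_1},C_{v_2}$ both touch $C_f$ and hence pass within $2r_f$ of one another, and the chain of circles around $C_{v_1}$ running from $C_{v_2}$ the long way back to $C_{v_2}$ is confined to a tiny sliver by Lemma~\ref{lem:circle-packing-structure}, so it cannot close up around $C_{v_1}$. Two things break. First, in a 3-connected packing this chain alternates vertex-circles and face-circles, and these are tangent but not internally disjoint, so Lemma~\ref{lem:circle-packing-structure} does not apply verbatim; one should prove a strengthened trapping lemma allowing $\mathcal{C}$ to be such an alternating chain, using that each interleaved face-circle is flanked by and tangent to its two neighbouring vertex-circles and is disjoint from $C_{v_1}$, which changes the recurrence in the proof of Lemma~\ref{lem:circle-packing-structure} by only a bounded factor. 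Second, and more seriously, this chain contains about $2\deg_G(v_1)$ circles, which for a high-degree vertex is $\Theta(n)$---at or past the $m<n$ threshold at which the lemma yields a contradiction. So one needs either a quantitatively sharper geometric estimate or an argument that the wrapping chain arising here is effectively short, e.g.\ because the portion of it adjacent to $C_f$ consists of circles no larger than $C_f$, each of which consumes a definite angular width around $C_{v_1}$. Controlling vertices and faces of very large degree is, I believe, the main obstacle; once it is handled the mirror statement also rules out small neighbours of $V$-vertices, and hence small neighbours of $F$-vertices, recovering the full strength of Claims~\ref{claim:f-vertex-neighbour}, \ref{claim:no-isolated-v-vertex} and \ref{claim:H-boundary-good}.

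With those claims in hand, (B) is finished as before. For a minimal bad cut $T$, every vertex of $H$ carries at most one bad edge, so $H[T]$ is a disjoint union of stars; a single star centred at $x$ would have $T$ isolate $x$, contradicting (A); hence there are at least two components, one finds a quadrilateral face $z_i = (u,f,v,g)$ with $T \cap E(z_i)$ a pair of opposite edges, and the case split on $r_v$ versus $r_f$ forces $u,v$ to be two large neighbours of $f$ or two large neighbours of $g$---impossible. If instead the geometric step only gives ``at most two bad edges per vertex'' (leaving open that an $F$-vertex might keep a small neighbour), then $H[T]$ may contain paths and cycles; a cycle in $H[T]$ is an alternating closed chain of bad edges, i.e.\ a separating curve threading only tiny circles, to which the strengthened trapping lemma should again be applied for a contradiction. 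The rest is a routine if lengthy transcription of the triangulation argument.
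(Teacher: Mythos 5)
This statement is stated in the paper as an open \emph{conjecture}; the paper offers no proof of it, so there is nothing to compare your attempt against on the paper's side. What matters is whether your proposal actually closes the problem, and it does not: by your own account the central step is missing. You correctly isolate where the triangulation proof uses degree three --- at an $F$-vertex of degree $3$, ``two large neighbours'' already forces a small third neighbour and the angle constraint kills it, whereas for $d \ge 4$ the configuration ``exactly two large neighbours, all others small'' is consistent with Equation~\eqref{eqn:angle-constraint-again} and must be excluded geometrically. But your treatment of that exclusion is a list of desiderata, not an argument: you would need (i) a version of Lemma~\ref{lem:circle-packing-structure} for the alternating primal--dual chain around $C_{v_1}$, in which the face-circles are \emph{orthogonal} to (not merely tangent to, and certainly not internally disjoint from) $C_{v_1}$ and its neighbour circles, so the recurrence $b_{i+1}=b_i-2$ has no obvious analogue; and (ii) a way around the $m<n$ hypothesis, since the wrapping chain can have $\Theta(n)$ circles. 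You acknowledge both obstacles and propose only that they ``should'' be surmountable by a ``quantitatively sharper estimate.'' That is exactly the open content of the conjecture, so the proposal has a genuine, load-bearing gap.

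Two smaller points. First, your reduction of part (B) to the claims is fine \emph{if} the full strength of Claim~\ref{claim:f-vertex-neighbour} (at most one large neighbour, no small neighbours) survives; but your weaker fallback (``at most two bad edges per vertex'') breaks the star-decomposition of $H[T]$ and the opposite-edges argument on a quadrilateral face, and the proposed repair (applying the trapping lemma to a cycle in $H[T]$) is again unspecified. Second, the self-duality you invoke ($\hat H_{G^*}=\hat H_G$) is a sound observation and does halve the case analysis, but it cannot substitute for the missing geometric lemma, since the hard configuration is self-dual in character. In short: the architecture you lay out is the natural one and is consistent with the paper's remarks (e.g.\ that $2n$ can be replaced by $2\Delta$ in the bounded-degree case), but as written this is a research plan, not a proof, and the conjecture remains open.
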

	
\section{Computing the Primal-Dual Circle Packing}

Throughout this section, $G$ denotes the triangulation
with outer cycle $C_o=(s_1,s_2,s_3)$ and unbounded face $f_{\infty}$ given as input to the algorithm;
$\hat{H}$ denotes the angle graph
of $G$ and $H$ the reduced angle graph. Let $n =|V(G)|+|F(G)|-1 =|V(H)|$. We index vectors by vertices rather than integers.
Our goal is to compute the radii for the $C_o$-regular representation of $G$. Recall $r_{f_\infty}, r_{s_1}, r_{s_2}, r_{s_3}$ are fixed by definition of $C_o$-regularity.

\subsection{Convex Formulation}

We transform the combinatorial question of finding the radii into a minimization problem of a continuous
function. A variant of this formulation was first given in~\cite{de1991principe}. 
\begin{defn}
Consider the following convex function $\Phi$ over $\mathbb{R}^{V(H)\setminus C_o}$:
\begin{equation}
\Phi(x):=\sum_{uw\in E(H)}\left(F(x_{u}-x_{w})+F(x_{w}-x_{u})-\frac{\pi}{2}(x_{u}+x_{w})\right)+2\pi\sum_{u\in V(H)}x_{u}\label{eqn:convex-objective}
\end{equation}
where $F(x)=\int_{-\infty}^{x}\arctan(e^{t})dt$,
and instances of $x_{s_{i}}$ in the expression take constant value of $\log\tan(\frac{\pi}{3})$ for all $s_i \in C_o$.
\end{defn}

The construction of this function $\Phi$ is motivated
by the optimality condition at its minimum $x^{*}$: for all $u\in V(H)\setminus C_o$,
\begin{align}
0 = \frac{\partial\Phi}{\partial x_{u}}(x^{*}) & =\sum_{w\;:\;uw\in E(H)}\left(F'(x_{u}^{*}-x_{w}^{*})-F'(x_{w}^{*}-x_{u}^{*})-\frac{\pi}{2}\right)+2\pi\nonumber \\
 & =\sum_{w\;:\;uw\in E(H)}\left(\arctan(e^{x_{u}^{*}-x_{w}^{*}})-\arctan(e^{x_{w}^{*}-x_{u}^{*}})-\frac{\pi}{2}\right)+2\pi\nonumber \\
 & =-2\sum_{w\;:\;uw\in E(H)}\arctan(e^{x_{w}^{*}-x_{u}^{*}})+2\pi\label{eq:circle_packing}
\end{align}
where we used that $\arctan(u)+\arctan(\frac{1}{u})=\pi/2$
for all $u>0$ at the end. Hence, $\exp(x^{*})$ satisfies the angle constraints
from Equation~\eqref{eqn:angle-constraints} for all $u\in V(H)\setminus C_o$.

\subsection{Correctness}

To show the minimizer of $\Phi$ gives a primal-dual circle-packing,
we first show $\Phi$ is strictly convex, which implies the solution
is unique.
\begin{lemma}
\label{lem:strictly_convex}For any $x$,
\[
\nabla^{2}\Phi(x)=\sum_{uw\in E(H)}2F''(x_{u}-x_{w})b_{uw}b_{uw}^{\top}
\]
where $b_{uw}\in\mathbb{R}^{V(H)\setminus C_o}$ is the vector
with 1 in the $u$ entry, $-1$ in the $w$ entry, and zeros everywhere
else. If $u$ or $v$ or both belongs to $C_o$, then $b_{uv}$ has only one or no non-zero entries. Furthermore, 
\[
\nabla^{2}\Phi(x)\succcurlyeq\frac{2}{n^{2}}\cdot\min_{uw\in E(T)}F''(x_{u}-x_{w})\cdot I\succ0
\]
for any spanning tree $T \subset H$.
\end{lemma}

\begin{proof}
The formula of $\nabla^{2}\Phi(x)$ follows from direct calculation.
To prove $\nabla^{2}\Phi(x)$ is positive-definite, we pick any spanning
tree $T$ in $H$. Note that
\begin{align}
\nabla^{2}\Phi(x) & \succcurlyeq\sum_{uw\in E(T)}2F''(x_{u}-x_{w})b_{uw}b_{uw}^{\top}\nonumber \\
 & \succcurlyeq\left(\min_{uw\in E(T)}2F''(x_{u}-x_{w})\right)\sum_{uw\in E(T)}b_{uw}b_{uw}^{\top}.\label{eq:Hess_lower_pf}
\end{align}
Fix any $h\in\mathbb{R}^{V(H)\setminus C_o}$ with $\|h\|_{2}^{2}=1$.
Then
\[
\sum_{uw\in E(T)}(b_{uw}^{\top}h)^{2}=\sum_{uw\in E(T)}(h_{u}-h_{w})^{2},
\]
where we define $h_{s}=0$ for all $s \in C_o$. Since $\|h\|_{2}^{2}=1$,
there exists a vertex $v$ such that $h_{v}\geq\frac{1}{\sqrt{n}}$. Now, consider
the path $P$ from $v$ to some $s \in C_o$. We have
\[
\sum_{uw\in E(T)}(b_{uw}^{\top}h)^{2}\geq\sum_{uw\in P}(h_{u}-h_{w})^{2}\geq\sum_{uw}(\frac{1}{\sqrt{n}|P|})^{2}=\frac{1}{n|P|}\geq\frac{1}{n^{2}},
\]
where we used the fact that the minimum of $\sum_{uw\in P}(h_{u}-h_{w})^{2}$
is attained by the vector $h$ whose entries decrease from $h_{v}=\frac{1}{\sqrt{n}}$ to $h_{s}=0$
uniformly on the path $P$. Using this in \eqref{eq:Hess_lower_pf}, we
have that for any $h$ with $\|h\|_2^{2}=1$,
\begin{align*}
h^{\top}\nabla^{2}\Phi(x)h &\geq \frac{2}{n^{2}}\cdot\min_{uw\in E(T)}F''(x_{u}-x_{w}) \\
\intertext{Since $F''(x)=\frac{\exp(x)}{\exp(2x)+1}>0$
	for all $x$, we have}
\nabla^{2}\Phi(x)&\succcurlyeq\frac{2}{n^{2}}\cdot\min_{uw\in E(T)}F''(x_{u}-x_{w})\cdot I\succ0.
\end{align*}
This proves that $\Phi$ is strictly convex.
\end{proof}
Now, we prove that the minimizer of $\Phi$ is indeed a primal-dual
circle packing.
\begin{theorem}
\label{thm:correctness}Let $x^{*}$ be the minimizer of $\Phi$.
Then, $r^{*}=\exp(x^{*})$, where the exponentiation is applied coordinate-wise,
is the radii vector of the unique $C_o$-regular primal-dual circle packing representation of $G$.
\end{theorem}

\begin{proof}
As discussed in Section~$\ref{subsec:angle-graph}$, there exists a unique $C_o$-regular circle packing representation. Theorem~\ref{thm:r-existence} shows that the associated radii vector $r$ satisfies
\[
\sum_{w\;:\;uw\in E(H)}\arctan(r_{w}/r_{u})=\pi,
\]
for all $u\in V(H)\setminus C_o$. By the formula of $\nabla\Phi$
in Equation~\eqref{eq:circle_packing}, we know $\nabla\Phi(\log r)=0$; therefore, $r$ is a minimizer of $\Phi$. Since $\Phi$
is strictly convex by Lemma~\ref{lem:strictly_convex}, the minimizer
is unique. Hence, $r^* = r$. 
\end{proof}

\subsection{Algorithm for Second-Order Robust Functions}

To solve for the minimizer of $\Phi$, a convex programming
result is used as a black box. We define the relevant terminology
below, and then present the theorem.
\begin{defn}
A function $f$ is \emph{second-order robust} with respect to $\ell_{\infty}$
if for any $x,y$ with $\norm{x-y}_{\infty}\leq1$,
\[
\frac{1}{c}\grad^{2}f(x)\preccurlyeq\grad^{2}f(y)\preccurlyeq c\grad^{2}f(x)
\]
for some universal constant $c>0$.
\end{defn}

Intuitively, the Hessian of a second-order robust function does not
change too much within a unit ball.
\begin{theorem}[{\cite[Thm 3.2]{cohen2017matrix}}]
\label{thm:convex-bb}Let $g:\mathbb{R}^{n}\to\mathbb{R}$ be a second-order
robust function with respect to $\ell_{\infty}$, such that its Hessian
is symmetric diagonally dominant (SDD) with non-positive off-diagonals,
and has $m$ non-zero entries. Given a starting point $x^{(0)}\in\mathbb{R}^{n}$,
we can compute a point $x$ such that $g(x)-g(x^{*})\leq\eps$ in
expected time
\[
\widetilde{O}\left((m+T)(1+D_{\infty})\log\left(\frac{g(x^{(0)})-g(x^{*})}{\eps}\right)\right)
\]
where $x^{*}$ is a minimizer of $g$, $D_{\infty}=\sup_{x:g(x)\leq g(x^{(0)})}\norm{x-x^{(0)}}_{\infty}$
is the $\ell_{\infty}$-diameter of the corresponding level-set of
$g$, and $T$ is the time required to compute the gradient and Hessian of $g$.
\end{theorem}

The algorithm behind the above result essentially uses Newton's method
iteratively, each time optimizing within a unit $\ell_{\infty}$-ball.
The key component involves approximately minimizing a
SDD matrix with non-positive off-diagonals in nearly linear time, by recursively approximating Schur complements.

For our function $\Phi$, there are two difficulties in using this
theorem. First, the level-set diameter $D_\infty$ could be very large because $\Phi$ is
only slightly strongly-convex.
So it would be better if $D_\infty$ were replaced with the distance between $x^{(0)}$ and $x^{*}$.
Second, we are multiplying $D_{\infty}$ and $\log(1/\eps)$ in the run-time expression when both terms could be very large; we would like to add the two instead. It turns out both can
be achieved at the same time by modifying the objective.
\begin{theorem}
\label{thm:improved-convex-bb}Let $g:\mathbb{R}^{n}\to\mathbb{R}$
be a second-order robust function with respect to $\ell_{\infty}$,
such that its Hessian is symmetric diagonally dominant with non-positive
off-diagonals, and has $m$ non-zero entries. Let $x^*$ be the minimizer of $g$, and suppose that $\nabla^{2}g(x^{*})\succcurlyeq\alpha I$, for some $\alpha < 1$.
Given a starting point $x^{(0)}\in\mathbb{R}^{n}$ and any $\eps\leq\alpha/2$, we can compute a point $x$ such that $g(x)-g(x^{*})\leq\eps$
in expected time
\[
\widetilde{O}\left((m+T)\left(R_{\infty}\log^2\left(\frac{g(x^{(0)})-g(x^{*})}{\alpha}\right)+\log\left(\frac{\alpha}{\eps}\right)\right)\right)
\]
where $R_{\infty}=\norm{x^{*}-x^{(0)}}_{\infty}$, and $T$ is the time required
to compute the gradient and Hessian of $g$. Furthermore, we have that $\|x-x^{*}\|_{2}^{2}\leq \eps/\alpha.$
\end{theorem}

\begin{proof}
    The algorithm builds on Theorem \ref{thm:convex-bb}, and the high
  level idea can be broken into two steps: The first step
  transforms the dependence on the diameter of the level set in
  Theorem \ref{thm:convex-bb} to the $\ell_{\infty}$ distance
  $\norm{x^{(0)} - x^*}_{\infty}$ from the initial point; the second
  step leverages the strong-convexity at the minimum to obtain an
  improved running time.

  For the first step, given the function $g(x)$ and an initial point
  $x^{(0)}$, we construct an auxiliary function $\tilde{g}(x)$ that
  adds a small convex penalty reflecting the distance between $x$ and
  the initial point $x^{(0)}$. Analytically, this allows us to replace
  the dependency on the diameter of the level-set in Theorem
  \ref{thm:convex-bb} with the initial $\ell_{\infty}$-distance
  $\norm{x^{(0)} - x^*}$.

  The second step leverages the fact that $\grad g(x^*) \succcurlyeq \alpha I$ at the minimum. Since the Hessian of $g$ is also robust, it
  is $\succcurlyeq \Omega(\alpha) I$ near the minimum.  Strong convexity implies that the additive error at a point is
  proportional to the distance from the point to $x^*$. Hence,
  running a robust Newton's method to roughly $\alpha$ additive accuracy
  guarantees that the output point $x^{(1)}$ is within an
  $\ell_{\infty}$ distance of roughly 1 from the minimum. The run-time to this point is less than running to $\eps$-accuracy when
  $\alpha > \eps$.  
  We then run the algorithm a second time to
  $\eps$-accuracy starting from $x^{(1)}$; this instance has a much  reduced $R_{\infty}$ distance. The run-time for the two phases together is lower compared to running the algorithm just once starting from $x^{(0)}$.

  The above overview is informal; in particular, the two steps
  cannot be as cleanly separated as described. Indeed, when constructing the auxiliary function $\tilde{g}$, we
  require prior knowledge of the initial distance
  $R_{\infty} = \norm{x^{(0)} - x^*}_{\infty}$ within a constant factor. To overcome this, we use a standard doubling trick: Starting
  from a safe lower bound, we presuppose an estimate for $R_{\infty}$
  and run the two steps as above. If the estimate for $R_{\infty}$ was
  too small (which we can detect), then we double our guess and try
  again. Since the overall run-time is proportional to $R_{\infty}$, we do not add to it asymptotically.

  We now describe the algorithm and prove the theorem in full detail.
  To begin, suppose $R_{\infty}:=\norm{x^{*}-x^{(0)}}_{\infty}$ is
  given. To minimize $g$, we construct a new function
\[
\tilde{g}(x)=g(x)+\frac{\eps}{4n}\sum_{i}\cosh\left(\frac{x_{i}-x_{i}^{(0)}}{R_{\infty}}\right).
\]
Note that if $x^\dagger$ is the minimizer of $\tilde g$, then
\[
\tilde g(x^\dagger) = \min_{x}\tilde{g}(x)\leq g(x^{*})+\frac{\eps}{4n}\sum_{i}\cosh\left(\frac{x_{i}^*-x_{i}^{(0)}}{R_{\infty}}\right)\leq g(x^{*})+\frac{\eps}{4n}\sum_{i}\cosh(1)\leq g(x^{*})+\frac{\eps}{2}
\]
and $\tilde{g}(x)\geq g(x)$ for all $x$. Therefore, to minimize $g$ with
$\eps$ accuracy, it suffices to minimize $\tilde{g}$ with $\eps/2$ accuracy.

We check the condition of Theorem \ref{thm:convex-bb} for $\tilde g$. The Hessian of $\tilde{g}$ is simply the Hessian of $g$ plus
a diagonal matrix, so $\nabla^{2}\tilde{g}$ is still SDD
with non-positive off-diagonals.
A simple calculation shows that $\tilde{g}$ is second-order robust.
To bound $D_{\infty}:=\sup_{x:\tilde{g}(x)\leq\tilde{g}(x^{(0)})}\norm{x-x^{(0)}}_{\infty}$,
note that for any $x$ with $\tilde{g}(x)\leq\tilde{g}(x^{(0)})$, we
have
\[
g(x^{(0)}) + \eps/4 = \tilde g(x^0) \geq g(x)+\frac{\eps}{4n}\sum_{i}\cosh\left(\frac{x_{i}-x_{i}^{(0)}}{R_{\infty}}\right)\geq g(x^{*})+\frac{\eps}{8n}\exp\left(\frac{\|x-x^{(0)}\|_{\infty}}{R_{\infty}}\right).
\]
Hence,
\[
D_\infty = \sup_{x:\tilde{g}(x)\leq\tilde{g}(x^{(0)})}\|x-x^{(0)}\|_{\infty}\leq R_{\infty}\log\left(\frac{8n}{\eps}( g(x^{(0)})-g(x^{*}) + \eps/4)\right).
\]

We apply Theorem~\ref{thm:convex-bb} to $\tilde g$ to get a point $x$ such that $\tilde g(x) - \tilde g(x^\dagger) < \eps/2$, using time
\begin{align*}
& \widetilde{O}\left((m+T)(1+D_{\infty})\log\left(\frac{\tilde g(x^{(0)})-\tilde g(x^{\dagger})}{\eps/2}\right)\right) \\
= & \widetilde{O}\left((m+T)\left(1+\log\left(\frac{g(x^{(0)})-g(x^{*})}{\eps}\right)R_{\infty}\right)\log \left(\frac{g(x^{(0)})-g(x^{*})}{\eps}\right)\right).
\end{align*}
This $x$ minimizes $g$ to $\eps$ accuracy. Henceforth we view the above reduction from $g$ to $\tilde{g}$ as a black-box. Now, we make some further observations regarding $g$.

\begin{lemma}\label{lem:convexity-bound}
	For any constant $C \leq 1$ and $x$ such that $\norm{x - x^*}_\infty = C$, we have $g(x) \geq g(x^*) + \Omega(\alpha) \cdot C^2$. Furthermore, if $x'$ satisfies $g(x') - g(x^*) \leq o(\alpha) \cdot C^2$, then $\norm{x' - x^*}_\infty \leq C$.
\end{lemma}
\begin{proof}
	Since $\nabla^{2}g(x^{*})\succcurlyeq\alpha\cdot I$ and $g$
	is second-order robust,  $\nabla^{2}g(x)\succcurlyeq\Omega(\alpha)\cdot I$
	for all $x$ with $\norm{x-x^{*}}_{\infty}\leq1$. Applying the Mean Value Theorem for $x$ with $\norm{x - x^*}_\infty = C$, we get
	\begin{equation} \label{eq:lowerbound-away-from-min}
	g(x)\geq g(x^{*})+\Omega(\alpha)\cdot\|x-x^{*}\|_{2}^{2}\geq g(x^{*})+\Omega(\alpha) \cdot C^2.
	\end{equation}
	Moreover, by convexity of $g$, we have $g(x)\geq g(x^{*})+\Omega(\alpha) \cdot C^2$
	for all $x$ where $\|x-x^{*}\|_{\infty}\geq C$. The second part of the Lemma is the contrapositive.
\end{proof}

To achieve the run-time stated in the theorem, we minimize $g$ in two phases.
In the first phase, we use $\eps_1 = \alpha/\log^2(\alpha/\eps)$ and initial point $x^{(0)}$ as given, to get a point $x^{(1)}$ such that $g(x^{(1)}) - g(x^*) \leq \eps_1$.
By Lemma~\ref{lem:convexity-bound}, we have $\|x^{(1)}-x^{*}\|_{\infty}\leq 1/\log(\alpha/\eps)$.
In the second phase, we minimize to $\eps$ error. However, since  $x^{(1)}$ can be used as the initial point, we know $R_\infty = 1/\log(\alpha/\eps)$. The algorithm returns $x^{(2)}$ such that $g(x^{(2)}) - g(x^*) \leq \eps$. Summing the run-time of the two phases carefully, we get the desired total time,
where factors of $\log\log(\alpha/\eps)$ are hidden. The claim $\norm{x^{(2)} - x^*}_2^2 \leq \eps/\alpha$ follows from  Equation~\eqref{eq:lowerbound-away-from-min} in  Lemma~\ref{lem:convexity-bound}.

Finally, we resolve the initial assumption of $R_{\infty}$ being given.
Note that we only use $R_{\infty}$ during the first phase of the algorithm, where we use the target accuracy
$\eps_1$. To run the first phase without knowing $R_{\infty}$,
we apply Lemma~\ref{lem:convexity-bound} again in a doubling trick.

Let $x^{(r)}=\arg\min_{\|x\|_{\infty}\leq r}g(x)$.
Consider $\hat{x} = \frac{x^{(r)}-x^{*}}{\|x^{(r)}-x^{*}\|_{\infty}}$, which satisfies $\norm{x^* - (x^* + \hat{x})}_\infty = 1$. Hence, by Lemma~\ref{lem:convexity-bound},
\[
g(x^* + \hat{x}) \geq g(x^{*})+\Omega(\alpha).
\]
Furthermore, $x^* + \hat{x}$ is on the straight line connecting $x^{*}$ and $x^{(r)}$. Since the slope of $g$ is increasing from $x^{*}$ to $x^{(r)}$, if
$\|x^{*}\|_{\infty}>2r$, we also have
\[
g(x^{(r)})\geq g(x^{(r)} - \hat{x})+\Omega(\alpha).
\]
Note that $\norm{x^{(r)} - \hat{x}}_\infty \leq 2r$. This shows that $\|x^{*}\|_{\infty}>2r$ implies
\[
\min_{\|x\|_{\infty} \leq r}g(x)\geq \min_{\|x\|_{\infty}\leq2r}g(x)+\Omega(\alpha).
\]
Hence, if $R_{\infty}>2r$, then we can detect it by comparing $x^{(r)}$ and $x^{(2r)}$.
To estimate $R_\infty$, first we run the algorithm while pretending $R_{\infty}=1$
and compare the result against $R_{\infty}=2$. If it fails this test,
then we compare the result for $R_{\infty}=2$ against $R_{\infty}=4$, and so on. We stop when the test passes, at which point the guess for $R_\infty$ is correct to a constant factor, and the true $x^*$ has been found.
This does not
affect the run-time asymptotically,
since the total time simply involves a term $1+2+\cdots+R_{\infty}$ instead of $R_{\infty}$.
\end{proof}

\subsection{Strong Convexity at the Minimum}

To apply Theorem \ref{thm:improved-convex-bb}, we need to show $\Phi$ is strongly-convex at $x^{*}$.
\begin{lemma}
\label{lem:strongly_convex}Let $x^{*}$ be the minimizer of $\Phi$.
Then,
\[
\nabla^{2}\Phi(x^{*})\succcurlyeq\frac{1}{n^{3}}I.
\]
\end{lemma}

\begin{proof}
Lemma \ref{lem:strictly_convex} shows that
\begin{equation}
\nabla^{2}\Phi(x)\succcurlyeq\frac{2}{n^{2}}\cdot\min_{uw\in E(T)}F''(x_{u}-x_{w})\cdot I\label{eq:Hess_low}
\end{equation}
for any spanning tree $T\subset H$. Theorem \ref{thm:good-tree}
shows that there is a spanning tree $T$ such that $|x_{u}-x_{w}|\leq \log 2n$
for any $uw\in E(T)$. Hence, we have
\[
F''(x_{u}-x_{w})\geq\exp(-|x_{u}-x_{w}|)\geq (2n)^{-1}
\]
for any $uw\in E(T)$. Putting it into Equation~\eqref{eq:Hess_low} gives
the desired bound.
\end{proof}

\subsection{Result}
We combine the previous sections for the overall result.
\begin{theorem}
	\label{thm:cvx-runtime} Let $r \in \R^{V(H) \setminus C_o}$ be the radii of the $C_o$-regular primal-dual circle packing representation of the triangulation $G$. Let $x^* = \log r$. For any $0<\eps<\frac{1}{2}$, we can compute a point
	$x$ such that $\|x-x^{*}\|_{\infty}\leq\eps$ in expected time 
	\[
	\widetilde{O}\left(n\log\frac{R}{\eps}\right),
	\]
	where $R = r_{\max}/r_{\min}$ is the ratio of the maximum to minimum radius of the circles.
\end{theorem}

\begin{proof}
	We check the conditions of Theorem \ref{thm:improved-convex-bb}.
	Lemma \ref{lem:strictly_convex} shows that
	\[
	\nabla^{2}\Phi(x)=\sum_{uw\in E(H)}2F''(x_{u}-x_{w})b_{uw}b_{uw}^{\top}.
	\]
	Since $F''>0$, we know $\nabla^{2}\Phi(x)$ is a positive combination of
	$b_{uw}b_{uw}^{\top}$ (each SDD with non-positive off-diagonals). Hence,
	$\nabla^{2}\Phi(x)$ is an SDD matrix with non-positive off-diagonals. 
	
	To show second-order robustness, note that if $x$ changes by 
	at most $1$ in the $\ell_{\infty}$-norm, then $x_{u}-x_{w}$ changes by at most $2$. Recall
	\[
	F''(x_{u}-x_{w})=\frac{\exp(x_{u}-x_{w})}{\exp(2(x_{u}-x_{w}))+1},
	\]
	so $F''$ changes by at most a factor of $e^{2}$. 
	It follows that $\nabla^{2}\Phi$ changes by at most a constant multiplicative factor.
	
	Lemma~\ref{lem:strongly_convex} shows that $\nabla^{2}\Phi(x^{*})\succcurlyeq\alpha I$
	with $\alpha=n^{-3}$. 
	
	Now, we can apply Theorem~\ref{thm:improved-convex-bb} as a black-box.
	Since the Hessian has $O(n)$ entries, each of which consists of a constant number of 
	hyperbolic computations, both $m$ and $T$ are $O(n)$.
	A simple initial point $x^{(0)}$ is the all zeros vector; it follows that
	$\Phi(x^{(0)})=O(n)$.
	
	$R_{\infty}$ in Theorem~\ref{thm:improved-convex-bb} is precisely $\norm{x^*}_\infty$.
	Since $x_{u}^{*} = \log r^*_u \leq 0$ for all $u\in V(H) \setminus C_o$,
	we know $x^{*}$ satisfies $\norm{x^{*}}_{\infty} = -\log r^*_{\min}<\log(r^*_{\max}/r^*_{\min})$,
	where $r^*_{\min}$ is the radius of the smallest circle in the true circle packing representation, and $r^*_{\max}=\tan(\frac{\pi}{3})$ is the radius of the largest circle, attained by vertices in $C_o$. 
	Therefore $R_{\infty}=\norm{x^{*}}_{\infty}\leq\log R$. 
	
	Lastly we estimate $\Phi(x^{*})$. Note that $\frac{\pi}{2}|z|\leq F(z)+F(-z)\leq\frac{\pi}{2}|z|+2$
	for all $z\in\mathbb{R}$. By Corollary~\ref{cor:min-max-radii-ratio}, $\norm{x^*}_\infty = \widetilde O(n)$ in the worst case. Hence,
	\begin{align*}
		-\Phi(x^{*}) & \leq - \sum_{uw\in E(H)}\left(\frac{\pi}{2}|x_{u}^{*}-x_{w}^{*}|-\frac{\pi}{2}(x_{u}^{*}+x_{w}^{*})\right)-2\pi\sum_{u\in V(H)}x_{u}^{*}\leq \widetilde{O}(n^{2}).
	\end{align*}
	It follows that $\Phi(x^{(0)})-\Phi(x^{*})\leq\widetilde{O}(n^{2})$.
	
	Now, Theorem~\ref{thm:improved-convex-bb} shows how to find $x$ with
	$\|x-x^{*}\|_{\infty}\leq\eps$ in time
	\[
	\widetilde{O}\left(n\left(\log R\log n+\log\frac{n}{\eps}\right)\right)=\widetilde{O}\left(n\log\frac{R}{\eps}\right).
	\]
\end{proof}
\subsection{Computing the Locations of the Vertices}\label{subsec:compute-loc}
After approximating the radii, we embed the primal and dual vertices using the reduced angle graph $H$. We emphasize at this point that $H$ is already a plane graph, so the cyclic ordering of neighbours around each vertex is known.

Suppose $r$ is the $\eps$-approximation of the radii we obtained, and $r^*$ is the true radii vector of the $C_o$-regular primal-dual representation. We define an edge $uw$ in $H$ to be \emph{approximately-good} (with respect to $r$) if it satisfies $ (1-\eps)/(1+\eps) \cdot (2n)^{-1} \leq r_u/r_w \leq (1+\eps)/(1-\eps) \cdot 2n$, and \emph{approximately-bad} if it does not. Note that a good edge (with respect to $r^*$) is an approximately-good edge, and an approximately-bad edge is a bad edge.

Recall the outer cycle of $G$ is $C_o = (s_1, s_2, s_3)$, and let the outer cycle of $H$ be denoted by $B = (s_1, t_1, s_2, t_2, s_3, t_3)$. We may assume for both the true embedding and our appropximate embedding, that $s_1$ is positioned at the origin, and $s_1s_2$ lie on the $x$-axis.

The high-level idea is to embed the vertices one-by-one following a breadth-first style traversal through $H$ using only approximately-good edges. Since the true positions of $s_1, s_2, s_3 \in C_o$ are known, and the outer cycle of $H$ consists of good edges, $t_1, t_2, t_3$ are embedded first. We proceed in a breadth-first fashion with one additional traversal restriction: Suppose we visited the vertex $u$, and let the neighbours of $u$ be $w_1, \dots, w_m$ in cyclic order. We can visit a neighbour $w_i$ only if either $w_{i-1}$ or $w_{i+1}$ has been visited already. This is so that when we move from $u$ to an unvisited neighbour $w_i$ (suppose $w_{i-1}$ was visited), we can place the kite $K_{uw_i}$ (see Section~\ref{subsec:angle-graph}) with one point at $u$ and one side tangent to the previous kite $K_{uw_{i-1}}$. The kite in turn determines the position of $w_i$ in the approximate circle packing representation. First, we show all vertices in $H$ can be reached this way.

\begin{figure}[H]
	\centering 
	\includegraphics[scale=0.6]{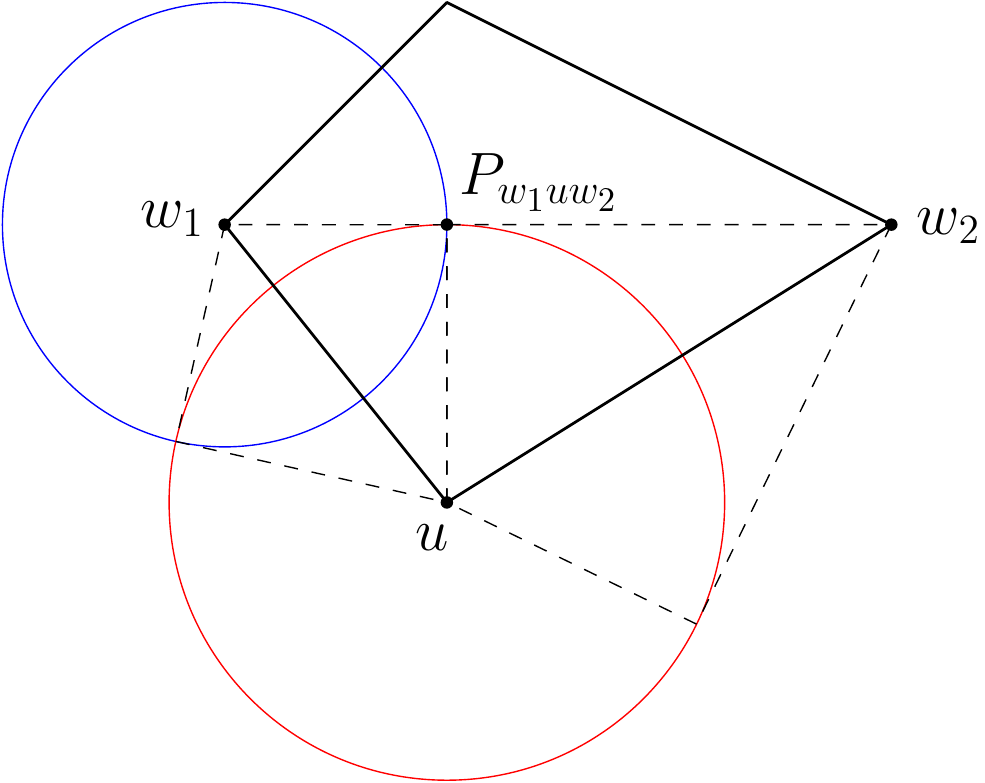}
	\caption{Embedding $H$ locally around $u$. The two kites $K_{uw_1}$ and $K_{uw_2}$ are shown in dashed lines. After $u$ and $w_1$ are embedded, the kite $K_{uw_2}$ is placed with one vertex at $u$, and one side tangent to $K_{uw_1}$ along the line $uP_{w_1uw_2}$. Its side lengths are $r_u$ and $r_{w_2}$.} 
	\label{fig:location}
\end{figure}

Suppose the vertex $w$ has neighbours $v_1, \dots, v_m$ in cyclic order, and we visited $v_i$ from $w$ but cannot reach $v_{i+1}$, due to the fact that $wv_{i+1}$ is an approximately-bad edge. Observe that $w, v_i, v_{i+1}$ are in a face together with another vertex $x_2$ (recall any bad edge is on the boundaries of two faces of degree four). By the arguments in Section~\ref{subsec:good-tree} and a simple case analysis, we see that either we can reach $v_{i+1}$ from $w$ by going through $v_i$ and then $x_2$ (implying $wv_i, v_ix_2, xv_{i+1}$ are all approximately-good edges), or $v_{i+1}$ is an $V$-vertex, $v_{i+1}x_2, v_{i+1}w$ are bad edges, and $v_i x_2$ is a good edge. In the latter case, let the neighbours of $v_{i+1}$ be $w = x_1, x_2, \dots, x_l$ in cyclic order, and suppose $j \geq 3$ is the smallest index at which $v_{i+1}x_j$ is a good edge. Note that for each $k < j$, the vertices $v_{i+1}, x_k, x_{k+1}$ are in a face together with another vertex $y_k$. As $v_{i+1}$ is a $V$-vertex, all the $x$'s are $F$-vertices of degree three; furthermore, since $v_{i+1}x_k$ is a bad edge, we know $x_ky_k$ and $x_k y_{k+1}$ must be good and hence approximately-good. (See Figure~\ref{fig:traversal} for an example with $j=5$.) It follows that $(w = x_1, y_1, x_2, y_2, \dots, y_{j-1}, x_j, v_{i+1})$ is an approximately-good path from $w$ to $v_{i+1}$, and going along this path does not violate our traversal restrictions. So we have shown that all vertices in $H$ can be reached.

\begin{figure}[H]
	\centering 
	\includegraphics[scale=0.95]{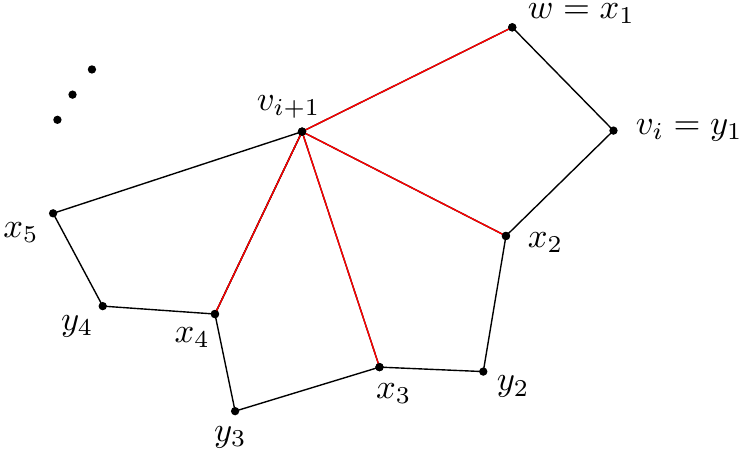}
	\caption{$H$ locally around a vertex $v_{i+1}$. The bad edges are shown in red; they may not be all approximately-bad. However, an approximately-good path from $w$ to $v_{i+1}$ exists and can be followed given the traversal restrictions. (Note that the angles in this diagram do not reflect the correct circle packing representation.)}
	\label{fig:traversal}
\end{figure} 
For any edge $uw$, we can define the angle $\theta_{uw}$ as half the angle contributed by the kite $K_{uw}$ at $u$ (see Section~\ref{subsec:angle-graph}). Compared to the true angle $\theta_{uw}^*$, the error is
\begin{align*}
	\abs{\theta_{uw} - \theta^*_{uw}} &= \abs{\arctan \frac{r_w}{r_u} - \arctan \frac{r_w^*}{r_u^*}} \leq 4\eps \frac{r_w^*}{r_u^*},
\end{align*}
where we used a coarse first order approximation, and the fact that $r_u$ is an $\eps$-approximation of $r^*_u$ for all vertices $u$. Along a good edge, this error is bounded by $8 \eps n$.

For an edge $uw$ whose embedding has been approximated, we can further define $\alpha_{uw}$ as the angle formed by the ray $uw$ (starting from the vertex $u$) and the $x$-axis, and $\alpha_{uw}^*$ as the true angle. Observe that if $\alpha_{uw} - \alpha_{uw}^* = \delta$, then $\alpha_{wu} - \alpha_{wu}^* = \delta$. Furthermore, suppose $uw_1, uw_2$ are approximately-good edges, and that we embed $w_2$ right after $u$ and $w_1$. Then 
\[
|\alpha_{uw_2} - \alpha^*_{uw_2}| \leq |\alpha_{uw_1} - \alpha^*_{uw_1}| + |\theta_{uw_1} - \theta_{uw_1}^*| + |\theta_{uw_2} - \theta_{uw_2}^*|;
\]
in other words, the angle errors accumulate linearly as we traverse through $H$. Since only approximately-good edges are used, we conclude that for all edge $uw$ used in the traversal, $|\alpha_{uw} - \alpha_{uw}^*| \leq O(\eps n^2)$. 

Finally, we compare the approximate position $p$ of the vertices with the true positions $p^*$. Suppose $u$ is embedded in its true position, and $v,w$ are consecutive neighbours of $u$ such that $uv, uw$ are approximately-good edges, and $v$ has been embedded. In embedding $w$, error is introduced by $\alpha_{uw}$ as well as by $r_u$ and $r_w$. Specifically, $p^*_w$ is at a distance of $\sqrt{{r_u^*}^2 + {r_w^*}^2}$ away from $p^*_u$ in the direction given by $\alpha^*_{uw}$, while the approximate position $p_{w}$ will be at a distance of $\sqrt{r_u^2 + r_{w}^2}$ away in the direction $\alpha_{uw}$. Basic geometry shows
\[
\norm{p_w^* - p_w}_2 \leq \lvert\sqrt{{r_u^*}^2 + {r_w^*}^2} - \sqrt{r_u^2 + r_w^2}\rvert + \lvert\alpha_{uw} - \alpha_{uw}^*\rvert \sqrt{r_u^2 + r_{w}^2} \leq (O(\eps) + \lvert\alpha_{uw} - \alpha_{uw}^*\rvert)\sqrt{r_u^2 + r_{w}^2} \leq O(\eps n^2),
\]
where we use the fact that $r_u \leq 1$ for all $u \in V(H) \setminus C_o$. The error accumulates linearly as we embed each vertex, hence $\norm{p_u^* - p_u}_2 \leq O(\eps n^3)$ for all vertices $u$. It follows that if $r$ is an $\eps/(n^3R)$-approximation of the true radii, then we can recover positions $p$ such that $\norm{p_u - p_u^*}_2 \leq \eps/R$ for each vertex $u$. Changing $\eps$ by a polynomial factor of $R$ and $n$ does not affect the run-time in the $O$-notation; this completes the proof of Theorem~\ref{thm:main}.

\subsection{A Remark About Numerical Precision}
In the proof of Theorem \ref{thm:cvx-runtime}, we only apply the algorithm from~\cite{cohen2017matrix} on convex functions $\tilde{g}$ that are well-conditioned; specifically, $n^{-O(1)}\cdot I\preccurlyeq\nabla^{2}\tilde{g}(x)\preccurlyeq n^{O(1)}\cdot I$ for all $x$ the algorithm queries.
In this case, it suffices to perform all calculations in finite-precision with $O(\log(\frac{n R}{\eps}))$ bits (See Section 4.3 in~\cite{cohen2017matrix} for the discussion). Therefore, with $O(\log(\frac{n R}{\eps}))$ bits calculations, we can compute the radius with $(1\pm\eps)$ multiplicative error and the location with $\eps/R$ additive error.

	\section{Computing the Primal Circle Packing} \label{subsec:circle-packing-proof}
Finally, we prove Theorem~\ref{cor:circle-packing} as a corollary of Theorem~\ref{thm:main}.

We may assume $G$ is 2-edge-connected, otherwise, we can simply split the graph at the cut-edge, compute the circle packing representations for the two components separately, and combine them at the edge after rescaling appropriately. 

First, a planar embedding of $G$ can be found in linear time such that all face boundaries are cycles. Next, $G$ can be triangulated by adding a vertex in each face of degree greater than three and connecting it to all the vertices on the face boundary. Let $V^+$ denote the set of additional vertices, and let $T$ denote the resulting triangulation with outer cycle $C_o$. Note that $|V(T)| = O(n)$, since $G$ is planar. 

We then run the primal-dual circle packing algorithm on $T$, which returns radius $r_u$ and position $p_u$ for each $u \in V(T)$, corresponding to an approximate $C_o$-regular representation (See Section~\ref{subsec:angle-graph}). By discarding the dual graph and additional vertices $V^+$, we obtain an $\eps$-approximation of the primal circle packing of $G$.  

The total run-time is $\widetilde{O}(|V(T)| \log R_{PD}/\eps)$, where $R_{PD} = r^*_{PD, \max}/r^*_{PD, \min}$ is the ratio of the maximum to minimum  radius in the target primal-dual circle packing of $T$. Let $R = r^*_{\max}/r^*_{\min}$ be the ratio of the maximum to minimum radius in the target primal circle packing. If $R_{PD} \leq \poly(n) R$, then $\widetilde{O}(|V(T)| \log R_{PD}/\eps) \leq \widetilde{O}(n \log R/\eps)$, and we have the claimed run-time. 

Note that $r^*_{PD,\max} = \tan (\pi/3)$ is attained by one of the vertices on the outer cycle $C_o$ of $T$. By construction of $T$, at least two of the vertices on $C_o$ belong to $V$, hence $r^*_{\max} = r^*_{PD, \max}$. It remains to show that $r^*_{\min}$ is polynomially close to $r^*_{PD, \min}$. 

We will make use of the terminology and lemmas introduced in Section~\ref{subsec:good-tree}. There are two cases to consider:
\begin{enumerate}
	\item $r^*_{PD, \min}$ is attained by some primal vertex $v \in V(T)$. If $v \in V$, then we are done. So we may assume $v \in V^+$. In the reduced graph $H_T$, since $v$ is an $V$-vertex, it at least one good neighbour $f$ by Claim~\ref{claim:no-isolated-v-vertex}; moreover, $f$ has at most one bad neighbour by Claim~\ref{claim:f-vertex-neighbour}, so it has another good neighbour $w \neq v$. Note that $w$ is at distance two from $v$ in $H_T$, implying that $w$ is a neighbour of $v$ in T, and therefore $w \in V$. Using properties of good edges, we get $r^*_w \leq (2n)^2 r^*_v$.
	\item $r^*_{PD, \min}$ is attained by some dual vertex $f \in V(T^*)$. We know $f$ is an $F$-vertex in $H_T$ and has at least two good neighbours by Claim~\ref{claim:f-vertex-neighbour}; moreover, it has at least two neighbours in $V$, since every face in $T$ contains at least two vertices from $V$ on its boundary. It follows that $f$ has a good neighbour $w$ where $w \in V$, so $r^*_w \leq 2n r^*_f$.
\end{enumerate}
In both cases, there exists $w \in V$ such that $r^*_w \leq (2n)^2 r^*_{PD, \min}$. Hence, $r^*_{\min} \leq (2n)^2 r^*_{PD, \min}$, as desired.
	
	\section{Acknowledgment}
	We thank Manfred Scheucher for providing an implementation of an algorithm for generating circle packing figures. We thank Adam Brown for helpful discussions. We thank the anonymous referees for their edits and suggestions.
	
	\bibliographystyle{alpha}
	\bibliography{ref}

\end{document}